\renewcommand{\epsilon}{\varepsilon}
\renewcommand{\phi}{\varphi}
\DeclareMathOperator{\alg}{alg}
\DeclareMathOperator{\GF}{GF}
\newcommand{\C}{{\mathbb C}}
\newcommand{\F}{{\mathbb F}}
\newcommand{\N}{{\mathbb N}}
\newcommand{\R}{{\mathbb R}}
\newcommand{\Z}{{\mathbb Z}}
\newcommand{\Fthree}{\F_3}
\newcommand{\Fp}{\F_p}
\newcommand{\Fpu}{\F_p^*}
\newcommand{\Fu}{F^*}
\newcommand{\Fqr}{F^{*2}}
\newcommand{\Eu}{E^*}
\newcommand{\acF}{F^{\alg}}
\newcommand{\acFu}{F^{\alg*}}
\newcommand{\du}{\Fu \sqcup U_E}
\newcommand{\dc}{\lambda}
\newcommand{\card}[1]{\left|{#1}\right|}
\newcommand{\cardF}{|F|}
\newcommand{\cardFu}{|\Fu|}
\newcommand{\sums}[1]{\sum_{\substack{#1}}}
\newtheorem{theorem}{Theorem}[section]
\newtheorem{proposition}[theorem]{Proposition}
\newtheorem{lemma}[theorem]{Lemma}
\newtheorem{corollary}[theorem]{Corollary}
\theoremstyle{definition}
\newtheorem{definition}[theorem]{Definition}
\newtheorem{example}[theorem]{Example}
\newtheorem{remark}[theorem]{Remark}
\newtheorem{step}{Step}
\newtheorem{gradus}{Step}
\begin{document}

\title[APN power functions with exponents expressed as fractions]{Almost perfect nonlinear power functions with exponents expressed as fractions}
\author[Katz, O'Connor, Pacheco, and Sapozhnikov]{Daniel J.~Katz, Kathleen R.~O'Connor, Kyle Pacheco, and Yakov Sapozhnikov}\thanks{Daniel J.~Katz is and Kathleen R.~O'Connor, Kyle Pacheco, and Yakov Sapozhnikov were with the Department of Mathematics, California State University, Northridge.  Yakov Sapozhnikov is with Schweitzer Engineering Laboratories, Inc.  This paper is based upon work supported in part by the National Science Foundation under Grants 1500856, 1815487, and 2206454.}
\date{04 June 2026}

\begin{abstract}
Let $F$ be a finite field, let $f$ be a function from $F$ to $F$, and let $a$ be a nonzero element of $F$.
The discrete derivative of $f$ in direction $a$ is $\Delta_a f \colon F \to F$ with $(\Delta_a f)(x)=f(x+a)-f(x)$.
The differential spectrum of $f$ is the multiset of cardinalities of all the fibers of all the derivatives $\Delta_a f$ as $a$ runs through $F^*$.
An almost perfect nonlinear (APN) function is one for which the largest cardinality in its differential spectrum is $2$.
Almost perfect nonlinear functions are of interest as cryptographic primitives.
If $d$ is a positive integer, then the power function over $F$ with exponent $d$ is the function $f \colon F \to F$ with $f(x)=x^d$ for every $x \in F$.
There is a small number of known infinite families of APN power functions.
In this paper, we re-express the exponents for one such family in a more convenient form.
This enables us not only to obtain the differential spectrum of each power function $f$ with an exponent in our family, but also to determine the elements that lie in an arbitrary fiber of the discrete derivative of $f$.
This differential analysis, which is far more detailed than previous results, is achieved by composing the discrete derivative of $f$ with some permutations and a double covering of its domain to obtain a function whose fibers can more readily be analyzed.
\end{abstract}
\keywords{finite field, power function, discrete derivative, differential spectrum, almost perfect nonlinear}
\subjclass{11T71, 11T06, 94A60, 94D10}

\maketitle

\section{Introduction}

If $F$ is a finite field, then a {\it power function over $F$} is a function $f\colon F \to F$ with $f(x)=x^d$ where $d$ is a positive integer.
Power functions are useful as cryptographic primitives, as they can be used to introduce nonlinearity into systems and can be evaluated quickly.
For example, the sole source of nonlinearity in the Advanced Encryption Standard (AES) is the power function $x\mapsto x^{254}$ over the finite field with $256$ elements.
This function is employed in the portion of the algorithm known as the substitution box (S-box).
Cryptosystems are subjected to attacks that exploit statistically predictable behaviors of the functions they employ.
Differential cryptanalysis focuses on the relation between the difference of inputs and the difference of outputs for a function, with the most resilient functions being those whose behavior is furthest from linear functions, for which the relation between difference of inputs and difference of outputs is invariable and transparent.
This paper concerns a class of power functions whose differential properties are very far from linear.

This paper is especially concerned with power functions whose exponents are expressed as ratios, and to this end, we set a convention about how these ratios are interpreted.
First of all, it should be noted that if $d$ and $e$ are positive integers, and $x\mapsto x^d$ and $x\mapsto x^e$ are functions from a finite field $F$ to itself, then these functions equal each other if and only if $d \equiv e \pmod{\cardFu}$, so although exponents should be considered positive integers, it suffices to specify them up to congruence modulo $\cardFu$.
A positive rational number $r$ can be understood as an exponent for a power function over the finite field $F$ if and only if $r$ can be written as a reduced fraction $d_1/d_2$ of integers $d_1$ and $d_2$ with $\gcd(d_2,\cardFu)=1$, in which case we interpret $x \mapsto x^r$ over $F$ as $x \mapsto x^d$ where $d$ is a positive integer with $d \equiv d_1 d_2^{-1} \pmod{\cardFu}$.
If we remove the condition that $d_1/d_2$ must be reduced in the previous sentence, then the value of $d$ modulo $\cardFu$ (and hence the same power function $x\mapsto x^d$ over $F$) is independent of our choice of $d_1$ and $d_2$ (but we must maintain the condition that $\gcd(d_2,\cardFu)=1$).
For example, if $F$ is the finite field of order $3^3$, then the exponent $d=(3^3+1)/(3^2+1)=28/10$ is reduced to $14/5$, which is interpreted as $14 \cdot 5^{-1} \pmod{3^3-1}$, that is, $14 \cdot 21 \equiv 8 \pmod{26}$, so the power function over $F$ with exponent $d$ is $f(x)=x^8$.

If $f\colon A \to B$ is any function and $b \in B$, then the {\it fiber of $f$ over $b$} is the set $f^{-1}(\{b\})=\{a \in A: f(a)=b\}$.
If $F$ is a field and $f\colon F \to F$ and $a \in \Fu$, then $\Delta_a f$ is the function from $F$ to $F$ with $(\Delta_a f)(x)=f(x+a)-f(x)$ for every $x \in F$, and we call $\Delta_a f$ the {\it discrete derivative of $f$ in direction $a$}.
We define $\Delta=\Delta_1$, and if we omit mention of a direction, then the {\it discrete derivative of $f$} is just $\Delta f$ with $(\Delta f)(x)=f(x+1)-f(x)$.

If $F$ is a finite field, then for $(a,b) \in \Fu\times F$ the {\it differential multiplicity of $f$ for $(a,b)$}, written $\delta_f(a,b)$, is the cardinality of the fiber of $\Delta_a f$ over $b$, i.e., $\delta_f(a,b)=\card{(\Delta_a f)^{-1}(\{b\})}=\card{\{x \in F: f(x+a)-f(x)=b\}}$.
It is equal to the number of solutions $(x,y) \in F\times F$ to the system of equations
\begin{align*}
y-x & = a\\
f(y)-f(x) & = b,
\end{align*}
and, as such, tells us how many ways the difference of outputs can equal $b$ when the difference of inputs equals $a$.
The {\it differential spectrum of $f$} is the multiset of all $\delta_f(a,b)$ as $(a,b)$ runs through $\Fu\times F$, and {\it differential uniformity of $f$} is the largest differential multiplicity.
Low differential uniformity makes a function resistant to differential cryptanalysis \cite{Nyberg} because it indicates that no difference of outputs is strongly correlated to any difference of inputs.
There is a great deal of interest in functions of differential uniformity $1$, called perfect nonlinear (PN) functions, and functions of differential uniformity $2$, called almost perfect nonlinear (APN) functions.

If $f\colon F \to F$ is the power function with $f(x)=x^d$ and $a \in \Fu$, then $(\Delta_a f)^{-1}(\{b\})$ can be obtained from $(\Delta f)^{-1}(\{b/a^d\})$ by scaling all the elements of the latter set by $a$, and thus $\delta_f(a,b)=\delta_f(1,b/a^d)$.
Therefore, we define the {\it reduced differential spectrum of $f$} as the multiset of all $\delta_f(1,c)$ as $c$ runs through $F$: it has the same elements as the full differential spectrum, but the number of instances of each value in the reduced spectrum is smaller by a factor of $\cardFu$.
Simply put, the reduced differential spectrum of $f$ is the multiset of cardinalities of the fibers of $\Delta f$.

If $t$ and $n_1,\ldots,n_t$ are nonnegative integers and $a_1,\ldots,a_t$ are distinct elements, then $n_1 [a_1]+\cdots+n_t [a_t]$ represents the multiset of elements drawn from $\{a_1,\ldots,a_t\}$ with precisely $n_j$ instances of element $a_j$ for each $j \in \{1,\ldots,t\}$.
Note that the reduced differential spectrum of a function $f\colon F \to F$ always contains $\cardF$ nonnegative integers whose sum is $\cardF$ since the fibers of a function form a disjoint cover of the domain, and so the average fiber size is $1$.
In particular, if $f$ is an APN power function, then the fiber sizes can only be $0$, $1$, and $2$, so the reduced differential spectrum must be of the form
\[
\frac{\cardF-N}{2} \, [0] + N \, [1] + \frac{\cardF-N}{2} \, [2]
\]
for some integer $N$ with $0\leq N < \cardF$.
If $f(x)=x^d$ and $\cardF$ is even or $d$ is odd, then it is easy to check that every fiber of $\Delta f$ is closed under the involution $x\mapsto -1-x$, which has no fixed point if $\cardF$ is even and only one fixed point, $-1/2$, if $\cardF$ is odd.
This makes $N=0$ when $\cardF$ is even and $N=1$ when $\cardF$ and $d$ are both odd.
The case when $\cardF$ is odd and $d$ is even is more difficult: $N$ must be an odd number (since $(\cardF-N)/2$ is an integer), but $N$ can be larger than $1$, and there are situations where the determination of $N$ is nontrivial, e.g., for the functions discussed in the main result of this paper.

The purpose of this paper is to examine the fibers of the discrete derivatives of an infinite family of APN power functions over finite fields of characteristic $3$.
To understand our theorem, one should recall that when $F$ is a finite field of odd characteristic, the extended quadratic character maps quadratic residues of $F$ to $1$, quadratic nonresidues of $F$ to $-1$, and the zero element of $F$ to $0$.
We now state our main result.
\begin{theorem}\label{Victoria}
Let $F$ be a finite field of order $q=3^n$ with $n$ odd and let $k$ be an even nonnegative integer with $\gcd(k,n)=1$.
Let $f\colon F \to F$ be the power function with exponent $(3^n+1)/(3^k+1)$.
If $\eta$ is the extended quadratic character of $F$, then we have the following determination of the fibers of $\Delta f$:
\begin{enumerate}[label=(\roman*)]
\item If $c \in F$ with $\eta(1-c^{3^k+1})=-1$, then $(\Delta f)^{-1}(\{c\})=\emptyset$.
\item If $c \in F$ with $\eta(1-c^{3^k+1})\not=-1$, then $(\Delta f)^{-1}(\{c\})=\{p_{10}(c),p_{11}(c)\}$, where $p_{10}$ and $p_{11}$ are polynomials in $\Fthree[x]$ defined just after this theorem.
If $c\in\Fthree$, then $p_{10}(c)=p_{11}(c)=1-c$, but if $c\not\in\Fthree$, then $p_{10}(c)$ and $p_{11}(c)$ are distinct elements of $F\smallsetminus\Fthree$.
\end{enumerate}
Therefore, for any $c \in \Fu$, we have $\card{(\Delta f)^{-1}(\{c\})}=1+\eta(1-c^{3^k+1})$, and $\card{(\Delta f)^{-1}(\{0\})}=1$, and the reduced differential spectrum of $f$ is
\[
\frac{3^n-3}{2} [0] + 3 [1] + \frac{3^n-3}{2} [2],
\]
which makes $f$ APN if $n>1$ (but $f$ is PN if $n=1$).
\end{theorem}
The polynomials $p_{10}$ and $p_{11}$ in \cref{Victoria} are defined by letting $\epsilon$ be a positive integer that is a multiplicative inverse of $(3^k-1)/4$ modulo $(q-1)/2$ (with $k$ and $q$ as in the theorem), and setting
\begin{align*}
p_1(x) & = (1-x^{3^k+1})^{\frac{q+1}{4}}, \\
p_2(x) & = x^{q-2+\frac{(q-3)\epsilon}{4}} \left(p_1(x)+1 \right)^{\frac{(3 q-1)\epsilon}{4}}, \\
p_3(x) & = p_2(x)+p_2(x)^{q-2}, \\
p_4(x) & = \prod_{j=0}^{n-1} \left((-1)^j p_1(x)^{3^{j k}}+1\right), \\
p_5(x) & = \left(p_4(x)-p_4(x)^{q-2}\right) \left(p_4(x)+p_4(x)^{q-2}\right)^{q-2}, \\
p_6(x) & = \left(1-p_5(x)^2\right)^{\frac{q+1}{4}}, \\
p_7(x) & = \left((x-1)^{\frac{q+1}{2}}-(x+1)^{\frac{q+1}{2}}\right)^{\frac{3^k+1}{2}} \left((x-1)^{\frac{3^k+1}{2}}-(x+1)^{\frac{3^k+1}{2}}\right)^{\frac{q-3}{2}}, \\
p_8(x) & = x^{\frac{2 q-3^k-3}{2}} p_7(p_6(x)) p_6(x), \\
p_9(x) & = (((x+1)^{q-2}+1)^{\frac{3^k+1}{2}}-1)^{q-2}, \\
p_{10}(x) & = p_9(p_3(x))\text{, and} \\
p_{11}(x) & = p_9(p_8(x)).
\end{align*}

The APN power functions described by this theorem were discussed under a different guise in other works.
We say that two exponents $d$ and $e$ are {\it equivalent} over a finite field $F$ of characteristic $p$ to mean that there is some $j \in \Z$ and $k \in \{1,-1\}$ such that $d\equiv p^j e^k \pmod{\cardFu}$, where we are only allowed to use $k=-1$ if $\gcd(e,\cardFu)=1$.  This gives an equivalence relation on positive integers.  For a long time it has been known (see, for example, \cite[pp.~475, 476]{Helleseth-Rong-Sandberg} and \cite[p.~1271]{Dobbertin}) that when exponents are equivalent they produce the same differential spectrum. 
Power functions with exponents equivalent to those in \cref{Victoria} are studied in the work of Zha and Wang \cite[Theorem 4.1]{Zha-Wang}, which indicates that they are APN (see \cref{Reginald} for details) without determining the differential spectrum.
The later work of Tian and Chen \cite{Tian-Chen} appears to give an independent proof that the functions are APN and to determine the size of each fiber.
Our \cref{Victoria} goes much further in actually furnishing polynomial formulae to determine the elements of the fibers.
These formulae require $O(\log(\cardF))$ multiplications, additions, and exponentiations in the finite field to obtain the exact contents of a fiber, while a direct calculation would require $O(\cardF)$ such operations.

The rest of this paper is organized as follows.
\cref{Eugene} provides preliminaries, i.e., definitions, notations, conventions, and basic general results.
Then we proceed to the proof of our result, \cref{Victoria}, which relies on the following main ideas:
\begin{enumerate}
\item \cref{Fabian} describes how composition with permutations transforms fibers of maps, and then uses specific permutations to transform our discrete derivative into a more tractable function.
\item This is not sufficient to achieve the analysis in \cref{Victoria}, so we need to go further afield and compose with a double covering, which is described in \cref{David}.
\item In \cref{Helen} we show how to determine precisely the elements in each fiber of the composition of the discrete derivative of our power function with the permutations from \cref{Fabian} and the double covering from \cref{David}.
\item We then use this vantage point in \cref{Prospero} to obtain the complete determination of the fibers of the discrete derivative, i.e., we prove \cref{Victoria}.
\end{enumerate}
We close with an appendix, which shows the equivalence between the exponents studied in \cref{Victoria} and those studied by Zha and Wang in \cite[Theorem 4.1]{Zha-Wang}.

\section{Preliminaries}\label{Eugene}

In this section, we give basic definitions, notations, and conventions used throughout the paper, as well as some basic general results that are useful later.
We use $\N$ to denote the set of nonnegative integers.
If $F$ is a field, then we use $\Fu$ to denote the multiplicative group of units of $F$ and $\acF$ to denote an algebraic closure of $F$.
We break the rest of this section into smaller sections on $p$-adic valuations (\ref{Valerie}), greatest common divisors (\ref{Greta}), quadratic residues and nonresidues (\ref{Robert}), power functions and roots (\ref{Ferdinand}), and some algebraic results (\ref{Algernon}) used in our proofs.

\subsection{Valuations}\label{Valerie}

In this section, we define the notion of a $p$-adic valuation and then analyze the $2$-adic valuations of terms that commonly occur in this work.
\begin{definition}
Let $n \in \Z$ and let $p$ be a prime number. If $n\not=0$, then the \textit{$p$-adic valuation of $n$}, denoted $v_p(n)$, is the unique $k \in \N$ such that $p^k|n$ and $p^{k+1}\nmid n$.
If $n=0$, then we decree $v_p(0)=\infty$, where $\infty$ is treated as $+\infty$ from the extended real number system.
\end{definition}
Because we are analyzing power functions with exponents of the form $(p^j+1)/(p^k+1)$ where $p$ is an odd prime, it is often necessary to understand the $2$-adic valuation of terms of the form $p^j+1$ and $p^j-1$, which are well-known cases of what is sometimes called the lifting-the-exponent lemma. 
\begin{lemma}\label{Francis}
Let $a \in \Z$ and let $j$ be a nonnegative integer.
\begin{enumerate}[label=(\roman*)]
\item If $a\equiv 1 \pmod{4}$, then
\begin{align*}
v_2(a^j+1) & = 1 \text{ and}\\
v_2(a^j-1) & = v_2(a-1)+v_2(j).
\end{align*}
\item
If $a \equiv 3 \pmod{4}$, then
\begin{align*}
v_2(a^j+1) & =
\begin{cases}
1 & \text{if $j$ is even,} \\
v_2(a+1) & \text{if $j$ is odd,}
\end{cases} \\
v_2(a^j-1) & =
\begin{cases}
v_2(a+1)+v_2(j) & \text{if $j$ is even,} \\
1 & \text{if $j$ is odd.}
\end{cases}
\end{align*}
\end{enumerate}
\end{lemma}

\subsection{Greatest common divisors}\label{Greta}

In our analysis of power functions with exponents of the form $(p^j+1)/(p^k+1)$ over the finite field of order $p^n$ where $p$ is an odd prime, it is often necessary to compute greatest common divisors involving terms of the form $p^m+1$ and $p^n-1$ (the latter being the order of the unit group of our field).
The first claim in the following lemma is well known, and although the latter claims are not as well known, they can be deduced in a straightforward manner from the results of \cite{Cade-Lau-Pedersen-Lossers} and \cite{Penney-Stein}.

\begin{lemma}\label{Clementine}
Let $a$ be an integer, and let $m$ and $n$ be nonnegative integers.
\begin{enumerate}[label=(\roman*)]
\item\label{Hortense} Then
\[
\gcd(a^m-1,a^n-1)=|a^{\gcd(m,n)}-1|.
\]
\item Also
\[
\gcd(a^m+1,a^n+1)=
\begin{cases}
|a^{\gcd(m,n)}+1| & \text{if $v_2(m) = v_2(n)$,} \\
2 & \text{if $v_2(m)\not=v_2(n)$ and $a$ is odd,} \\
1 & \text{if $v_2(m)\not=v_2(n)$ and $a$ is even.}
\end{cases}
\]
\item And also
\[
\gcd(a^m+1,a^n-1)=
\begin{cases}
|a^{\gcd(m,n)}+1| & \text{if $v_2(m) < v_2(n)$,} \\
2 & \text{if $v_2(m) \geq v_2(n)$ and $a$ is odd,} \\
2 & \text{if $(m,n)=(0,0)$ and $a$ is even,} \\
1 & \text{otherwise.}
\end{cases}
\]
\item\label{Arthur} Furthermore, if $a$ is odd, then
\[
\gcd\left(\frac{a^m+1}{2},a^n-1\right)=
\begin{cases}
\frac{|a^{\gcd(m,n)}+1|}{2} & \text{if $v_2(m) < v_2(n)$,} \\[2pt]
2 & \text{if $a \equiv 3 \!\!\!\!\! \pmod{4}$ and $m,n$ are odd,} \\
1 & \text{otherwise.}
\end{cases}
\]
\end{enumerate}
\end{lemma}

\subsection{Quadratic residues and nonresidues}\label{Robert}

For a finite field $F$ of odd characteristic, a {\it quadratic residue of $F$} is an element of the form $a^2$ for some $a \in \Fu$ and a {\it quadratic nonresidue of $F$} is an element of $\Fu$ that is not a quadratic residue.
The $0$ element is considered neither a quadratic residue nor a quadratic nonresidue.
We let $\Fqr$ denote the set of quadratic residues of $F$.
There are $\cardFu/2$ quadratic residues and $\cardFu/2$ quadratic nonresidues in $F$.
If $a \in F$, then
\[
a^{\cardFu/2}=
\begin{cases}
0 & \text{if $a=0$,} \\
1 & \text{if $a$ is a quadratic residue,} \\
-1 & \text{if $a$ is a quadratic nonresidue.}
\end{cases}
\]
The {\it extended quadratic character of $F$} is the map $\eta\colon F \to \C$, that maps $0$ to $0$, quadratic residues to $1$, and quadratic nonresidues to $-1$.  Thus, if $F$ is of characteristic $p$ and order $q$, then $a^{(q-1)/2}$ is equal to $\eta(a) \pmod{p}$ for every $a \in F$.
We therefore have $\eta(a b)=\eta(a)\eta(b)$ for every $a,b \in F$.  If  we remove $0$ from the domain and codomain of $\eta$, we obtain the {\it quadratic character of $F$}, which is a homomorphism from the multiplicative group of $F$ onto the multiplicative subgroup $\{1,-1\}$ of $\C$.

It is often important to ascertain whether $-1$ is a quadratic residue, and we have the following well-known criterion.
\begin{lemma}\label{Quentin}
Let $F$ be a finite field of odd characteristic $p$ and order $p^n$.  The element $-1$ is a quadratic nonresidue in $F$ if $\cardF \equiv 3 \pmod{4}$ (which happens if and only if $p\equiv 3 \pmod{4}$ and $n$ is odd); otherwise $-1$ is a quadratic residue.
\end{lemma}
Then the multiplicativity of the quadratic character gives the following.
\begin{corollary}\label{Alasdair}
Let $F$ be a finite field with $\cardF \equiv 3 \pmod{4}$ and let $x \in \Fu$.  Then exactly one element of $\{x,-x\}$ is a quadratic residue and the other is a quadratic nonresidue.
\end{corollary}
When $-1$ is a quadratic nonresidue in a field, it helps to have a polynomial formula for singling out which of a pair of distinct opposite elements is the quadratic residue and which is the quadratic nonresidue.
\begin{lemma}\label{Lily}
Let $F$ be a finite field with $\cardF\equiv 3\pmod{4}$ and let $\eta\colon F \to \C$ be the quadratic character of $F$.  If $x,y \in \Fu$, then the unique element $z \in \{x,-x\}$ with $\eta(z)=\eta(y)$ is $y^{(\cardF-1)/2} x^{(\cardF+1)/2}$.
If $x \in \Fqr$, then $z=y^{(\cardF-1)/2} x$.
\end{lemma}
\begin{proof}
Let $w=y^{(\cardF-1)/2} x^{(\cardF+1)/2}$.
Then $w/x$ is equal to $(x y)^{(\cardF-1)/2}$, which is in $\{1,-1\}$, so $w \in \{x,-x\}$.  Also $w y = (x y)^{(\cardF+1)/2}$ so that $\eta(w y)$ is congruent modulo the characteristic of $F$ to
\[
(w y)^{(\cardF-1)/2}=((x y)^{\cardF-1})^{(\cardF+1)/4}=1^{(\cardF+1)/4}=1,
\]
where we make use of the fact that $4$ divides $\cardF+1$ because of our given assumption about the cardinality of $F$.
Thus, $\eta(w y)=1$, and so $w$ is one element of $\{x,-x\}$ with $\eta(w)=\eta(y)$, but there is no other such element by \cref{Alasdair}, so $z=w$.
If $x\in\Fqr$, then $x^{(\cardF-1)/2}=1$, making $z=y^{(\cardF-1)/2} x$.
\end{proof}
The following result considers certain sets that are defined using the quadratic character; these feature in a later result (\cref{Uriah}) that is a critical step in establishing our main result.
\begin{lemma}\label{James}
Let $F$ be a finite field of odd characteristic and order $q$ with extended quadratic character $\eta$.
Let $E$ be the quadratic extension of $F$, $B=\{x \in F: \eta(x)\not=1, \eta(x+1)\not=-1\}$, and $\Gamma=\{x \in E: x^2 \in B\}$.
Then $-1 \in B$ if and only if $q\equiv 3 \pmod{4}$.
Both primitive fourth roots of unity lie in $\Gamma$ if $q \equiv 3 \pmod{4}$; otherwise, neither of them lies in $\Gamma$.
Furthermore,
\begin{align*}
|B|
& = \begin{cases}
\frac{q+3}{4} & \text{if $q\equiv 1\pmod{4}$,} \\[2pt]
\frac{q+5}{4} & \text{if $q\equiv 3\pmod{4}$,}
\end{cases} \\
|\Gamma|
& = \begin{cases}
\frac{q+1}{2} & \text{if $q\equiv 1\pmod{4}$,} \\[2pt]
\frac{q+3}{2} & \text{if $q\equiv 3\pmod{4}$.}
\end{cases}
\end{align*}
\end{lemma}
\begin{proof}
It is clear that $0 \in B$.
\cref{Quentin} shows that $-1 \in B$ if and only if $q \equiv 3 \pmod{4}$, whence follows the claim about primitive fourth roots and $\Gamma$.
We can count the elements of $F\smallsetminus\{0,-1\}$ in $B$ using the character sum
\begin{align*}
\sum_{x \in F\smallsetminus\{0,-1\}} & \left(\frac{1-\eta(x)}{2}\right) \left(\frac{1+\eta(x+1)}{2}\right) \\
&=\frac{1}{4} \sum_{x \in F\smallsetminus\{0,-1\}} \left(1-\eta(x)+\eta(x+1)-\eta(x(x+1))\right) \\
&=\frac{q-2+\eta(-1)}{4},
\end{align*}
where the last equality uses the fact that there are equally many quadratic residues and nonresidues in $\Fu$ to sum $\eta(x)$ and $\eta(x+1)$, and uses \cite[Thm.~5.48]{Lidl-Niederreiter} to sum $\eta(x(x+1))$.
We use \cref{Quentin} to evaluate $\eta(-1)$ and recall which the elements of $\{0,-1\}$ lie in $B$ to conclude that if $q\equiv 1 \pmod{4}$, there are $1+(q-1)/4$ elements in $B$, and if $q\equiv 3\pmod{4}$, there are $2+(q-3)/4$ elements in $B$.
Since every nonzero element of $F$ has two square roots in $E$, while $0$ only has one square root in $E$, we see that $|\Gamma|=2|B|-1$.
\end{proof}
The following is a congruence used later to determine whether a certain element is a quadratic residue or nonresidue over a prime field.
\begin{lemma}\label{Molly}
Let $p$ be an odd prime and $j$ a nonnegative integer.  Then $(p^j-1)/2 \equiv j(p-1)/2 \pmod{p-1}$.
\end{lemma}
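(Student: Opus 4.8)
The plan is to reduce the claimed congruence to an elementary parity statement. Since $p$ is odd, both $(p^j-1)/2$ and $j(p-1)/2$ are integers, so multiplying through by $2$ shows that the assertion $(p^j-1)/2 \equiv j(p-1)/2 \pmod{p-1}$ is equivalent to $p^j - 1 \equiv j(p-1) \pmod{2(p-1)}$. Now I would use the factorization $p^j - 1 = (p-1)\sum_{i=0}^{j-1} p^i$, which holds for $j \geq 1$ (the case $j=0$ being trivial, as both sides of the original congruence vanish). Writing also $j(p-1) = (p-1)\sum_{i=0}^{j-1} 1$, the target congruence becomes $(p-1)\sum_{i=0}^{j-1}(p^i - 1) \equiv 0 \pmod{2(p-1)}$, that is, $\sum_{i=0}^{j-1}(p^i-1) \equiv 0 \pmod 2$. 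Since $p$ is odd, every term $p^i - 1$ is even, so the sum is even and we are done.

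A second, equally short route is induction on $j$. The base case $j=0$ is immediate. For the inductive step, write $(p^{j+1}-1)/2 = (p^j-1)/2 + p^j(p-1)/2$ and note that $p^j(p-1)/2 - (p-1)/2 = (p-1)(p^j-1)/2$ is a multiple of $p-1$; hence $(p^{j+1}-1)/2 \equiv (p^j-1)/2 + (p-1)/2 \equiv (j+1)(p-1)/2 \pmod{p-1}$ by the induction hypothesis.

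I do not expect any genuine obstacle here. The one point demanding a little care is the clearing of the denominator $2$ (equivalently, keeping track of the fact that all the quantities appearing are integers precisely because $p$ is odd); once that is in place the argument is completely elementary.
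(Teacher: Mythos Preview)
Your first proof is essentially the same as the paper's: both multiply by $2$ to reduce to $p^j-1\equiv j(p-1)\pmod{2(p-1)}$, factor $p^j-1=(p-1)\sum_{i=0}^{j-1}p^i$, and then observe the relevant parity (the paper notes $\sum p^i$ and $j$ have the same parity, you note each $p^i-1$ is even, which amounts to the same thing). Your second proof by induction is a correct alternative that the paper does not give.
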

\begin{proof}
What we want to prove is equivalent to $p^j-1 \equiv j (p-1) \pmod{2(p-1)}$.
Notice that $p^j-1 =(p-1) \sum_{k=0}^{j-1} p^k$, which is $p-1$ times a sum of $j$ odd numbers, so it is an even (resp., odd) multiple of $p-1$ if $j$ is even (resp., odd).  The same can be said of $j(p-1)$, so they match modulo $2(p-1)$.
\end{proof}

\subsection{Power functions and roots}\label{Ferdinand}

This section examines the fibers of power functions $x \mapsto x^d$, which is another way of looking at $d$th roots.
We begin with a general result on the structure of the fibers of a power function, which is a well known consequence of the theory of finite cyclic groups, so we record it without proof.
\begin{lemma}\label{Herbert}
Let $F$ be a finite field of order $q$ with primitive element $\alpha$, $d$ a positive integer, and $f\colon F \to F$ with $f(x)=x^d$.  Let $g=\gcd(d,\cardFu)$.  Let $G$ be the unique cyclic subgroup of $\Fu$ of order $g$, which is generated by $\alpha^{(q-1)/g}$.  If $c \in F$, then precisely one of the following cases holds.
\begin{enumerate}[label=(\roman*)]
\item If $c=0$, then $f^{-1}(\{c\})=\{0\}$.
\item If $c$ is the $g$th power of some element of $\Fu$, then $f^{-1}(\{c\})$ is a coset of $G$ in $\Fu$, and in particular, $f^{-1}(\{1\})=G$.
\item Otherwise, $f^{-1}(\{c\})$ is empty.
\end{enumerate}
In particular, $f$ is a permutation if and only if $g=1$.
\end{lemma}
It is helpful to have an explicit polynomial formula for the square root.  This assists us in writing the polynomial formulae in our main result (\cref{Victoria}).
\begin{corollary}\label{Bernard}
Let $F$ be a finite field with $\cardF\equiv 3 \pmod{4}$ and let $x \in F$.
If $x\in\Fqr\cup\{0\}$, then $x^{(\cardF+1)/4}$ is a square root of $x$.
Furthermore, if $x\in\Fqr$, then $x^{(\cardF+1)/4}$ is the unique square root of $x$ that is a quadratic residue in $F$.
\end{corollary}
\begin{proof}
When $x=0$ the claim is trivial, so assume that $x\in\Fqr$ henceforth.
By \cref{Herbert} and the fact that the unique subgroup of order $2$ in $\Fu$ is $\{1,-1\}$, each quadratic residue has precisely two square roots, which are opposites of each other, so by \cref{Alasdair} precisely one of the square roots of $x$ is a quadratic residue.
Notice that $(x^{(\cardF+1)/4})^4 = x^{\cardF+1}=x^2$.
Thus, $(x^{(\cardF+1)/4})^2 \in \{x,-x\}$, but by \cref{Alasdair}, $-x$ is a quadratic nonresidue, so $(x^{(\cardF+1)/4})^2=x$.
Since  $x^{(\cardF+1)/4}$ is a power of a quadratic residue, it must itself be a quadratic residue.
\end{proof}
The following lemma analyzes the fibers of very specific power functions that arise in the proofs leading up to \cref{Victoria}.
For these functions, one has polynomial formulae for the elements of the fibers.
\begin{lemma}\label{Yolanda}
Let $F$ be a finite field of order $q=3^n$ with $n$ odd, let $k$ be an even nonnegative integer with $\gcd(k,n)=1$, and let $e_2=(3^k-1)/2$.
Then $\gcd(e_2,\cardFu)=2$.
Let $\epsilon$ be a positive integer that is a multiplicative inverse of $e_2/2$ modulo $\cardFu/2$.
Let $g \colon F \to F$ be the power map $g(x)=x^{e_2}$.
For each $c \in F$, we have
\[
g^{-1}(\{c\})=
\begin{cases}
\{0\} & \text{if $c=0$,} \\
\{c^{\epsilon(q+1)/4},-c^{\epsilon(q+1)/4}\}   & \text{if $c\in\Fqr$,}\\
\emptyset & \text{otherwise,}
\end{cases}
\]
and in the second case, the element $c^{\epsilon(q+1)/4}$ is a quadratic residue while $-c^{\epsilon(q+1)/4}$ is a quadratic nonresidue.
\end{lemma}
\begin{proof}
We have $\gcd(e_2,\cardFu) \mid \gcd(3^k-1,3^n-1) = 3^{\gcd(k,n)}-1=2$ by \cref{Clementine}\ref{Hortense}.
Both $e_2$ and $\cardFu$ are even by \cref{Francis}, so $\gcd(e_2,\cardFu)=2$ and $\gcd(e_2/2,\cardFu/2)=1$, thus ensuring that $\epsilon$ is defined.
\cref{Herbert} proves our claims about $g^{-1}(\{c\})$ when $c$ is $0$ or a quadratic nonresidue, and it proves that $g^{-1}(\{c\})$ is a coset of $\{1,-1\}$ in $\Fu$ when $c$ is a quadratic residue, which we assume to be the case for the rest of this proof.
From our definition of $\epsilon$ we have $\epsilon e_2 \equiv 2 \pmod{\cardFu}$, and so $g(c^{\epsilon(q+1)/4})=c^{\epsilon e_2 (q+1)/4}=(c^{(q+1)/4})^2=c$ by \cref{Bernard}, and so we conclude that $g^{-1}(\{c\})=\{c^{\epsilon(q+1)/4},-c^{\epsilon(q+1)/4}\}$.
\cref{Bernard} also shows that $c^{(q+1)/4}\in\Fqr$, and so $c^{\epsilon(q+1)/4}\in\Fqr$, while $-c^{\epsilon(q+1)/4}$ is a quadratic nonresidue by \cref{Alasdair}.
\end{proof}

\subsection{Algebraic lemmas}\label{Algernon}

Two algebraic results that will be useful later are proved here.
\cref{Oscar} is used to prove \cref{Rita}, which is used later in \cref{Beverly}, where we analyze the fibers of discrete derivatives of power functions with exponents of the form $(p^j+1)/(p^k+1)$ where $p$ is a prime number.
\begin{lemma}\label{Oscar}
Let $p$ be a prime and let $j$ be a nonnegative integer.
Then the polynomial $(x+1)^{p^j+1}-(x-1)^{p^j+1} \in \F_p[x]$ is equal to $2(x^{p^j}+x)$.
\end{lemma}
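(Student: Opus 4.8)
The plan is to reduce everything to the Frobenius endomorphism of $\F_p[x]$. Iterating the $p$-th power map $j$ times gives a ring homomorphism $g \mapsto g^{p^j}$ on $\F_p[x]$, so in particular $(x+1)^{p^j} = x^{p^j} + 1^{p^j} = x^{p^j} + 1$ and $(x-1)^{p^j} = x^{p^j} + (-1)^{p^j}$. The one small thing I would verify is that $(-1)^{p^j} = -1$ in $\F_p$ regardless of $p$: if $p$ is odd then $p^j$ is odd and $(-1)^{p^j} = -1$, while if $p = 2$ then $-1 = 1$ in $\F_2$ so $(-1)^{p^j} = 1 = -1$ as well. Hence $(x-1)^{p^j} = x^{p^j} - 1$ in all cases.

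With these identities in hand, I would expand directly: $(x+1)^{p^j+1} = (x+1)^{p^j}(x+1) = (x^{p^j}+1)(x+1) = x^{p^j+1} + x^{p^j} + x + 1$, and similarly $(x-1)^{p^j+1} = (x^{p^j}-1)(x-1) = x^{p^j+1} - x^{p^j} - x + 1$. Subtracting cancels the $x^{p^j+1}$ and constant terms and doubles the two middle terms, giving $2x^{p^j} + 2x = 2(x^{p^j}+x)$, as claimed.

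There is no real obstacle here; the only subtlety worth a sentence is that the argument is uniform in $p$, so no separate treatment of characteristic $2$ is needed. In that case the asserted identity reads $0 = 0$, since $x+1 = x-1$ and $2 = 0$ in $\F_2[x]$, and the computation above recovers exactly this.
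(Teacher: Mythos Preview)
Your proof is correct and follows essentially the same route as the paper: split off one factor, apply the Frobenius identity $(x\pm 1)^{p^j}=x^{p^j}\pm 1$, expand the two products, and subtract. Your extra remark verifying $(-1)^{p^j}=-1$ uniformly in $p$ is a harmless clarification the paper leaves implicit.
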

\begin{proof}
We have
\begin{align*}
(x+1)^{p^j+1}-(x-1)^{p^j+1}
& = (x+1)^{p^j}(x+1)-(x-1)^{p^j}(x-1) \\
& = (x^{p^j}+1)(x+1)-(x^{p^j}-1)(x-1) \\
& = 2 (x^{p^j}+x).\qedhere
\end{align*}
\end{proof}
The next lemma uses the idea substituting $x+x^{-1}$ in powers of $x+2$ and $x-2$ and then completing the square to obtain formulae whose exponents are doubled.
This idea was used to great effect by Coulter and Matthews to discover new perfect nonlinear functions in fields of characteristic $3$ (see the proof of their Theorem 4.1 in \cite{Coulter-Matthews}).
Interest in the transformation $\rho(x)=x+x^{-1}$ goes back much further.
For example, the $n$th Dickson polynomial $D_n(x,y)$ has the property that $D_n(\rho(x),1)=\rho(x^n)$ (see \cite[eq.~(7.8)]{Lidl-Niederreiter}).
\begin{lemma}\label{Rita}
Let $p$ be a prime and let $j$ be a nonnegative integer.
Then the Laurent polynomial $(x+x^{-1}+2)^{(p^j+1)/2}-(x+x^{-1}-2)^{(p^j+1)/2} \in \F_p[x,x^{-1}]$ is equal to $2 x^{-(p^j+1)/2} (x^{p^j}+x)$.
\end{lemma}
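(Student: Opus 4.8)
The plan is to reduce \cref{Rita} directly to \cref{Oscar} via the two elementary factorizations
\[
x+x^{-1}+2 = x^{-1}(x+1)^2 \qquad\text{and}\qquad x+x^{-1}-2 = x^{-1}(x-1)^2
\]
in the Laurent polynomial ring $\F_p[x,x^{-1}]$, which one checks by expanding $(x\pm1)^2 = x^2 \pm 2x + 1$ and multiplying by $x^{-1}$. These are the shadows in the ``$x+x^{-1}$'' coordinate of the trivial identity $c\pm 2 = (\theta \pm \theta^{-1})^2$ when $c = \theta + \theta^{-1}$, which is exactly the change of variable used throughout \cref{Elaine}--\cref{Eusebius}; here we only need it formally at the level of polynomials.

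Granting that $(p^j+1)/2$ is a positive integer (automatic when $p$ is odd, and equal to $1$ in the degenerate case $p=2$, $j=0$), I would raise each factorization to the power $(p^j+1)/2$. Because squaring followed by raising to the $(p^j+1)/2$ power is the same as raising to the $(p^j+1)$ power, this gives
\[
(x+x^{-1}+2)^{(p^j+1)/2} = x^{-(p^j+1)/2}(x+1)^{p^j+1}, \qquad (x+x^{-1}-2)^{(p^j+1)/2} = x^{-(p^j+1)/2}(x-1)^{p^j+1},
\]
all as identities in $\F_p[x,x^{-1}]$.

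Subtracting and pulling out the common monomial factor $x^{-(p^j+1)/2}$ leaves
\[
(x+x^{-1}+2)^{(p^j+1)/2} - (x+x^{-1}-2)^{(p^j+1)/2} = x^{-(p^j+1)/2}\bigl((x+1)^{p^j+1} - (x-1)^{p^j+1}\bigr),
\]
and \cref{Oscar} identifies the bracketed expression with $2(x^{p^j}+x)$, yielding the claimed value $2x^{-(p^j+1)/2}(x^{p^j}+x)$. There is no real obstacle: the only point demanding a word of care is that the exponent $(p^j+1)/2$ must be an integer for the Laurent-polynomial manipulations to be legitimate, so I would either restrict to $p$ odd at the outset or dispatch the $p=2$ case separately, and I would make explicit that the factorizations and the exponent arithmetic all take place inside $\F_p[x,x^{-1}]$ rather than requiring any evaluation at field elements.
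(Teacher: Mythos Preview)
Your proposal is correct and follows essentially the same argument as the paper: both use the factorizations $x+x^{-1}\pm 2 = x^{-1}(x\pm 1)^2$ (the paper writes this as dividing $(x^2\pm 2x+1)^{(p^j+1)/2}$ by $x^{(p^j+1)/2}$), then apply \cref{Oscar} to the resulting difference $(x+1)^{p^j+1}-(x-1)^{p^j+1}$. Your remark on the integrality of $(p^j+1)/2$ is a small point the paper leaves implicit.
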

\begin{proof}
We note that $(x+x^{-1}+2)^{(p^j+1)/2}-(x+x^{-1}-2)^{(p^j+1)/2}$ is equal to 
\begin{align*}
\frac{(x^2+2 x+1)^{(p^j+1)/2}-(x^2-2 x+1)^{(p^j+1)/2}}{x^{(p^j+1)/2}} 
& = \frac{(x+1)^{p^j+1}-(x-1)^{p^j+1}}{x^{(p^j+1)/2}} \\
& = \frac{2(x^{p^j}+x)}{x^{(p^j+1)/2}},
\end{align*}
where the last equality uses \cref{Oscar}.
\end{proof}

\section{Permuting fibers}\label{Fabian}

Direct analysis of the fibers of the discrete derivative of a power function with fractional exponent can be difficult.
This section shows that we can compose the discrete derivative with permutations to obtain more tractable functions.
\cref{Genevieve} has the general theory of how composition with permutations transforms fibers.
\cref{Eustace} then considers the particular transformations that simply our task for the discrete derivative of a power function with a fractional exponent.

\subsection{General principles}\label{Genevieve}
The following two lemmas show how composition with a permutation transforms the fibers of a function, first for composition on the left, and then for composition on the right.
\begin{lemma}\label{Fedor}
Let $A$ and $B$ be sets, let $\pi$ be a permutation of $B$, let $g\colon A \to B$ and $f=\pi \circ g$.  Then for each $b \in B$ we have $f^{-1}(\{b\})=g^{-1}(\{\pi^{-1}(b)\})$, so that $|f^{-1}(\{b\})|=|g^{-1}(\{\pi^{-1}(b)\})|$.  Thus, the multiset of cardinalities of fibers of $f$ is the same as the multiset of cardinalities of fibers of $g$.
\end{lemma}
\begin{proof}
We know that $a \in f^{-1}(\{b\})$ if and only if $b=\pi(g(a))$, which is true if and only if $\pi^{-1}(b)=g(a)$, which is true if and only if $a \in g^{-1}(\{\pi^{-1}(b)\})$.  Since $\pi$ is a permutation of $B$, we know that $\pi^{-1}(b)$ runs through $B$ as $b$ runs through $B$, so the multiset of cardinalities of fibers of $f$ is the same as that of $g$.
\end{proof}
\begin{lemma}\label{Leonid}
Let $A$ and $B$ be sets, let $\pi$ be a permutation of $A$, let $g\colon A \to B$, and let $f=g \circ \pi$.  Then for each $b \in B$ we have $f^{-1}(\{b\})=\pi^{-1}(g^{-1}(\{b\}))$, so that $|f^{-1}(\{b\})|=|g^{-1}(\{b\})|$.  Thus, the multiset of cardinalities of fibers of $f$ is the same as the multiset of cardinalities of fibers of $g$.
\end{lemma}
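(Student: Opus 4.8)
\textbf{Proof plan for \cref{Leonid}.} The statement is a routine set-theoretic fact about composing with a permutation on the source side, and the plan is to mirror the proof of \cref{Fedor}, which handled composition with a permutation on the target side. The single identity to establish is $f^{-1}(\{b\})=\pi^{-1}(g^{-1}(\{b\}))$ for each $b \in B$; once this is in hand, taking cardinalities and invoking that $\pi^{-1}$ is a bijection will immediately give everything claimed.

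First I would unwind the definitions: for $a \in A$ we have $a \in f^{-1}(\{b\})$ if and only if $f(a)=b$, which, since $f = g\circ\pi$, means $g(\pi(a))=b$, i.e.\ $\pi(a) \in g^{-1}(\{b\})$. Applying $\pi^{-1}$ (legitimate because $\pi$ is a permutation of $A$, hence a bijection) this is equivalent to $a \in \pi^{-1}(g^{-1}(\{b\}))$. Chaining these equivalences yields the desired set equality $f^{-1}(\{b\})=\pi^{-1}(g^{-1}(\{b\}))$.

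Next I would pass to cardinalities. Since $\pi^{-1}$ is a bijection from $A$ to $A$, its restriction to any subset $S \subseteq A$ is injective, so $|\pi^{-1}(S)| = |S|$; applying this with $S = g^{-1}(\{b\})$ gives $|f^{-1}(\{b\})| = |g^{-1}(\{b\})|$. This establishes a fiber-by-fiber equality of sizes, which is in fact stronger than the conclusion of \cref{Fedor}: here the fiber over each \emph{individual} $b$ has the same cardinality for $f$ and $g$, with no reindexing of $B$ needed. In particular the multiset of fiber cardinalities of $f$ equals that of $g$, as $b$ ranges over $B$, which is the final assertion.

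There is no real obstacle here; the only point requiring a word of care is the harmless abuse of notation whereby $\pi^{-1}(g^{-1}(\{b\}))$ denotes the image of the set $g^{-1}(\{b\})$ under the map $\pi^{-1}$ (equivalently the preimage of that set under $\pi$), and one should note that injectivity of $\pi^{-1}$ is exactly what makes the passage from the set identity to the cardinality identity valid. Everything else is a direct chain of biconditionals.
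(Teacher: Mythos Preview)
Your proposal is correct and follows essentially the same approach as the paper's proof: the identical chain of biconditionals $a\in f^{-1}(\{b\}) \Leftrightarrow g(\pi(a))=b \Leftrightarrow \pi(a)\in g^{-1}(\{b\}) \Leftrightarrow a\in\pi^{-1}(g^{-1}(\{b\}))$, followed by the observation that $\pi$ being a permutation preserves cardinalities. The paper's version is just terser.
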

\begin{proof}
We know that $a \in f^{-1}(\{b\})$ if and only if $b=g(\pi(a))$, which is true if and only if $\pi(a) \in g^{-1}(\{b\})$, which is true if and only if $a \in \pi^{-1}(g^{-1}(\{b\}))$.  Since $\pi$ is a permutation, the cardinalities of $\pi^{-1}(g^{-1}(\{b\}))$ and $g^{-1}(\{b\})$ are the same for each $b \in B$.
\end{proof}

\subsection{Fibers of the discrete derivative}\label{Eustace}
In this section, we transform the discrete derivative of a power function with selected permutations that lead to functions whose fibers are easier to analyze.  There are four transformations that one can apply in succession, although the second and third are only helpful when our power function has a fractional exponent, while the fourth only applies in odd characteristic.
The first two transformations (in Lemmas \ref{Aaron} and \ref{Beatrice}) are inspired by the work of Hertel and Pott (see the proof of their Theorem 10 in \cite{Hertel-Pott}).
First we describe a permutation that is used in some of the transformations.
\begin{lemma}\label{Xavier}
Let $F$ be a finite field and let $\pi\colon F \to F$ with $\pi(x)=x^{\cardF-2}+1$.
Then $\pi$ is a permutation of $F$ with $\pi(0)=1$ and $\pi(x)=x^{-1}+1$ for $x\not=0$.
\end{lemma}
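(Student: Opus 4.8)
The plan is to recognize the power map $x \mapsto x^{\cardF-2}$ as the inversion map of $\Fu$ extended by $0 \mapsto 0$, and then to observe that $\pi$ is this map followed by the translation $y \mapsto y+1$; all three assertions then fall out immediately, so the lemma really just unpacks a definition.

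First I would establish the pointwise formulas. Assuming $\cardF \geq 3$, so that the exponent $\cardF - 2$ is a positive integer (the case $\cardF = 2$ being checked directly), we have $0^{\cardF-2} = 0$ and hence $\pi(0) = 0^{\cardF-2}+1 = 1$. For $x \in \Fu$, the group $\Fu$ has order $\cardFu = \cardF - 1$, so Lagrange's theorem (Fermat's little theorem for finite fields) gives $x^{\cardF-1} = 1$, whence $x^{\cardF-2} = x^{\cardF-1}x^{-1} = x^{-1}$ and therefore $\pi(x) = x^{-1}+1$. This gives both displayed formulas.

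It remains to see that $\pi$ is a bijection of $F$. Since $\cardF - 2$ and $\cardF - 1$ are consecutive integers, $\gcd(\cardF-2,\cardFu) = \gcd(\cardF-2,\cardF-1) = 1$, so $x \mapsto x^{\cardF-2}$ is a permutation of $F$ by \cref{Herbert}; alternatively, one checks directly that this map is its own inverse, since inversion is an involution on $\Fu$ and fixes $0$. Composing with the translation $y \mapsto y+1$, which is patently a permutation of $F$, shows that $\pi$ is a permutation. The argument is entirely routine and presents no real obstacle; the only mild subtlety is the degenerate exponent when $\cardF = 2$, which is disposed of by direct inspection.
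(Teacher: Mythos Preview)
Your proof is correct and essentially the same as the paper's: both compute $\pi(0)=1$ and $\pi(x)=x^{-1}+1$ for $x\neq 0$ and then verify bijectivity. The only cosmetic difference is that the paper checks injectivity directly (noting $x^{-1}+1\neq 1$ and $x^{-1}+1=y^{-1}+1\Rightarrow x=y$), whereas you factor $\pi$ as a power permutation composed with a translation; both are one-line arguments.
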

\begin{proof}
The values of $\pi(0)$ and $\pi(x)$ for nonzero $x$ are clear, and since $x^{-1}+1$ can never equal $1$ and $x^{-1}+1=y^{-1}+1$ if and only if $x=y$, we see that $\pi$ is an injective function from a finite set into itself, hence a permutation.
\end{proof}
The following four lemmas describe four transformations involving composition with permutations that successively take the discrete derivative $\Delta f$ of our power function $f$ to more tractable forms, which we call $f_1$, $f_2$, $f_3$, and $f_4$.
The first transformation applies to all power functions.
\begin{lemma}\label{Aaron}
Let $F$ be a finite field, $d$ a positive integer, and $f\colon F \to F$ the power map $f(x)=x^d$.
Let $\pi\colon F \to F$ be defined by $\pi(x)=x^{\cardF-2}+1$.
Let $f_1 \colon F \to F$ be defined by $f_1(1)=1$ and $f_1(x)=(x^d-1)/(x-1)^d$ for $x\not=1$.
Then $\Delta f=f_1\circ \pi$.
\end{lemma}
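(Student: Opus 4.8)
The plan is to prove the identity $\Delta f = f_1\circ\pi$ by evaluating both sides at an arbitrary $x\in F$ and comparing. By definition $(\Delta f)(x)=f(x+1)-f(x)=(x+1)^d-x^d$, and by \cref{Xavier} we have $\pi(0)=1$ and $\pi(x)=x^{-1}+1=(x+1)/x$ for $x\neq 0$. Since the piecewise clauses in the definitions of $\pi$ and of $f_1$ can only come into play at special arguments, I would split into the three cases $x=0$, $x=-1$, and $x\notin\{0,-1\}$.

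The generic case $x\notin\{0,-1\}$ is the heart of the computation, and it is short. Here $\pi(x)=(x+1)/x$ is nonzero (because $x\neq-1$) and unequal to $1$ (because $x\neq 0$), so $f_1$ is evaluated by its rational formula, and using $\pi(x)-1=1/x$ we get
\[
(f_1\circ\pi)(x)=\frac{\pi(x)^d-1}{(\pi(x)-1)^d}=\frac{\bigl((x+1)^d-x^d\bigr)/x^d}{1/x^d}=(x+1)^d-x^d=(\Delta f)(x).
\]

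It remains to check the two boundary cases against the special values in the definitions of $\pi$ and $f_1$. For $x=0$: $\pi(0)=1$, so $(f_1\circ\pi)(0)=f_1(1)=1$, while $(\Delta f)(0)=1^d-0^d=1$ since $d\geq 1$. For $x=-1$: $\pi(-1)=(-1)^{-1}+1=0$, so $(f_1\circ\pi)(-1)=f_1(0)=(0^d-1)/(0-1)^d=(-1)^{d+1}$, while $(\Delta f)(-1)=0^d-(-1)^d=(-1)^{d+1}$. In every case the two sides agree, which proves the lemma. The only point requiring care — not really an obstacle — is keeping track of which branch of the two piecewise definitions is active; note in particular that the convention $0^{\cardF-2}=1$ needed when $\cardF=2$ is already incorporated into the statement of \cref{Xavier} (which asserts $\pi(0)=1$), so no separate small-field argument is needed.
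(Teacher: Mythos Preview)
Your proof is correct and takes essentially the same approach as the paper, which likewise splits on $x=0$ versus $x\in F^*$ and computes directly via \cref{Xavier}; your separate case $x=-1$ is harmless but unnecessary, since the rational formula for $f_1$ only requires its argument to differ from $1$, not from $0$. (One small slip in your closing parenthetical: to obtain $\pi(0)=1$ one would need $0^{\cardF-2}=0$, not $1$---but as you say, this is already absorbed into the statement of \cref{Xavier} and does not affect the argument.)
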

\begin{proof}
We have $(f_1\circ\pi)(0)=f_1(1)=1=f(1)-f(0)=(\Delta f)(0)$.
For $x \in \Fu$, by \cref{Xavier} we have
\begin{align*}
(f_1\circ \pi)(x)
& = \frac{(x^{-1}+1)^d-1}{((x^{-1}+1)-1)^d} \\
& = (x+1)^d-x^d. \qedhere
\end{align*}
\end{proof}
The next two transformations are only helpful when the power function's exponent is in fractional form.
\begin{lemma}\label{Beatrice}
Let $F$ be a finite field and let $d_1$ and $d_2$ be positive integers with $\gcd(d_2,\cardFu)=1$.
Let $f_1$ be the function over $F$ defined in \cref{Aaron} with exponent $d$ set to be the fractional exponent $d_1/d_2$.
Let $\sigma\colon F \to F$ with $\sigma(x)=x^{d_2}$.
Then $\sigma$ is a permutation of $F$.
Let $f_2\colon F \to F$ with $f_2(1)=1$ and $f_2(x)=(x^{d_1}-1)^{d_2}/(x^{d_2}-1)^{d_1}$ for $x\not=1$.
Then $f_2$ is a defined function from $F$ to $F$ with $f_2=\sigma \circ f_1 \circ \sigma$.
\end{lemma}
\begin{proof}
Notice that the denominator in the definition of $f_2$ is $(\sigma(x)-1)^{d_1}$, which is zero if and only if $\sigma(x)=1$, which is true if and only if $x=1$ because $\sigma$ is a permutation by \cref{Herbert} since we assumed that $\gcd(d_2,\cardFu)=1$.  This makes $f_2$ defined.
Notice that $(\sigma\circ f_1\circ\sigma)(1)=(\sigma\circ f_1)(1)=\sigma(1)=1=f_2(1)$.
On the other hand if $x \in F$ with $x\not=1$, then
\begin{align*}
(\sigma \circ f_1 \circ \sigma)(x)
& = \left(\frac{x^{d_2 d}-1}{(x^{d_2}-1)^d}\right)^{d_2} \\
& = \frac{(x^{d_1}-1)^{d_2}}{(x^{d_2}-1)^{d_1}} \\
& = f_2(x).\qedhere
\end{align*}
\end{proof}
\begin{lemma}\label{Celia}
Let $F$ be a finite field and let $d_1$ and $d_2$ be positive integers with $\gcd(d_2,\cardFu)=1$.
Let $f_2\colon F \to F$ be the function from \cref{Beatrice}.
Let $\pi\colon F \to F$ be defined by $\pi(x)=x^{\cardF-2}+1$.
Let $f_3\colon F \to F$ be given by 
\[
f_3(x)=\frac{((x+1)^{d_1}-x^{d_1})^{d_2}}{((x+1)^{d_2}-x^{d_2})^{d_1}}.
\]
Then $f_3$ is a defined function from $F$ to $F$ with $f_3 =f_2 \circ \pi$.
\end{lemma}
\begin{proof}
Note that $x\mapsto x^{d_2}$ is a permutation of $F$ by \cref{Herbert} since $\gcd(d_2,\cardFu)=1$.  Thus, the denominator in the definition of $f_3$ is nonzero for every $x \in F$, making $f_3$ a defined function.
We have $(f_2 \circ \pi)(0)=f_2(1)=1=f_3(0)$.
If $x \in \Fu$, then by \cref{Xavier} we have
\begin{align*}
(f_2 \circ \pi)(x)
& = \frac{((x^{-1}+1)^{d_1}-1)^{d_2}}{((x^{-1}+1)^{d_2}-1)^{d_1}} \\
& = \frac{((x+1)^{d_1}-x^{d_1})^{d_2}}{((x+1)^{d_2}-x^{d_2})^{d_1}}. \qedhere
\end{align*}
\end{proof}
The fourth and final transformation, which we describe next, is only possible in odd characteristic.
The terms $x+2$ and $x-2$ that appear in the formula for $f_4(x)$ make it easy to complete the square if one substitutes $x+x^{-1}$ for $x$ in powers of $x+2$ and $x-2$ (as is done in \cref{Rita}).
\begin{lemma}\label{Dorothy}
Let $F$ be a finite field of odd characteristic and let $d_1$ and $d_2$ be positive integers with $\gcd(d_2,\cardFu)=1$.
Let $f_3\colon F \to F$ be the function from \cref{Celia}.
Let $\tau \colon F \to F$ with $\tau(x)=(x-2)/4$.
Then $\tau$ is a permutation.
Let $f_4 \colon F \to F$ with
\[
f_4(x)=\frac{((x+2)^{d_1}-(x-2)^{d_1})^{d_2}}{((x+2)^{d_2}-(x-2)^{d_2})^{d_1}}.
\]
Then $f_4$ is a defined function from $F$ to $F$ with $f_4=f_3 \circ \tau$ and with $f_4(-x)=(-1)^{d_2-d_1} f_4(x)$ for every $x \in F$.
\end{lemma}
\begin{proof}
Note that $x\mapsto x^{d_2}$ is a permutation of $F$ by \cref{Herbert} since $\gcd(d_2,\cardFu)=1$.
Thus, the denominator in the definition of $f_4$ is nonzero for every $x \in F$ because $F$ is of odd characteristic.
This makes $f_4$ a defined function.
Note also that $\tau$ is a permutation because it has an inverse function $x\mapsto 4 x+2$.
For any $x \in F$, we have
\begin{align*}
(f_3 \circ \tau)(x)
& = \frac{\left(\left(\frac{x+2}{4}\right)^{d_1}-\left(\frac{x-2}{4}\right)^{d_1}\right)^{d_2}}{\left(\left(\frac{x+2}{4}\right)^{d_2}-\left(\frac{x-2}{4}\right)^{d_2}\right)^{d_1}} \\
& = f_4(x).
\end{align*}
If $x \in F$, then
\begin{align*}
f_4(-x)
& = \frac{\left((-x+2)^{d_1}-(-x-2)^{d_1}\right)^{d_2}}{\left((-x+2)^{d_2}-(-x-2)^{d_2}\right)^{d_1}} \\
& = \frac{\left((-1)^{d_1+1} \left((x+2)^{d_1}-(x-2)^{d_1}\right)\right)^{d_2}}{\left((-1)^{d_2+1}\left((x+2)^{d_2}-(x-2)^{d_2}\right)\right)^{d_1}} \\
& = (-1)^{d_2-d_1} f_4(x). \qedhere
\end{align*}
\end{proof}
\begin{remark}
The maps $\pi$, $\sigma$, and $\tau$ that appear in Lemmas \ref{Aaron}--\ref{Dorothy} are all permutations, so that by Lemmas \ref{Fedor} and \ref{Leonid}, the functions $\Delta f$, $f_1$, $f_2$, $f_3$, and $f_4$ from Lemmas \ref{Aaron}--\ref{Dorothy} all have the same multiset of cardinalities of fiber sizes.  Thus, the differential spectrum of $\Delta f$ (which is the discrete derivative of a power function) can be determined by examining $f_1$, $f_2$, $f_3$, or $f_4$.
\end{remark}
\begin{example}\label{Sven}
Let $F$ be the finite field of order $3^3$, and let $f\colon F \to F$ be the power function $f(x)=x^d$, where $d$ is the exponent $(3^3+1)/(3^2+1)=28/10=14/5$, where we note that $\gcd(5,3^3-1)=1$.  In an example in the Introduction, we saw that $f(x)=x^8$.  This section has shown that all the following functions have the same multiset of fiber sizes:
\begin{align*}
(\Delta f)(x) & = (x+1)^8-x^8 \\
f_1(x)        & = \begin{cases} 1 & \text{if $x=1$} \\ \frac{x^8-1}{(x-1)^8} & \text{otherwise} \end{cases} \\
f_2(x)        & = \begin{cases} 1 & \text{if $x=1$} \\ \frac{(x^{14}-1)^5}{(x^5-1)^{14}} & \text{otherwise} \end{cases} \\
f_3(x)        & = \frac{\left((x+1)^{14}-x^{14}\right)^5}{\left((x+1)^5-x^5\right)^{14}} \\
f_4(x)        & = \frac{\left((x+2)^{14}-(x-2)^{14}\right)^5}{\left((x+2)^5-(x-2)^5\right)^{14}}.
\end{align*}
\end{example}

\section{A double covering of a finite field}\label{David}

Permuting fibers of our discrete derivative using the ideas of \cref{Fabian} is not of itself sufficient for the analysis in \cref{Victoria}; we must go beyond permutations.
When one wants to track the sizes of transformed fibers, the best-behaved transformations are maps whose fibers are all the same size.
The simplest of these are bijections (all fibers of size $1$) such as permutations, but in this work we shall need to venture further and use a double covering (all fibers of size $2$).
The double covering we use is the finite field analogue of the double covering of the real line $\R$ by the map $x \mapsto x+x^{-1}$ whose domain is the disjoint union of $\R^*$ and the complex unit circle.
In \cref{Grace}, we define the analogous sets for the finite field case, and in \cref{Donald}, we define the analogous map, prove that it is a double covering, and show how it affects fibers when we compose with it.

\subsection{Half field, conjugation, and the unit circle}\label{Grace}

Here we develop the analogy between finite fields and the real and complex fields.
A finite field of even degree over its prime subfield can be thought of as an analogue of $\C$, and then its subfield of index $2$ is analogous to $\R$.
We develop a terminology and notation for this.
\begin{definition}[half field]
If $F$ is a finite field of characteristic $p$ and order $q=p^n$ with $n$ even, then the {\it half field of $F$}, denoted $H_F$, is the unique subfield of $F$ with $[F:H_F]=2$, i.e., the subfield of order $p^{n/2}$.
\end{definition}
Then the nontrivial automorphism of a field over its half field is analogous to complex conjugation.
\begin{definition}[conjugation]
If $F$ is a finite field of characteristic $p$ and order $q=p^n$ with $n$ even, then the {\it conjugation map on $F$} is the power map $x\mapsto x^{p^{n/2}}$ of $F$.  This is an automorphism of $F$ of order $2$ whose fixed field is the half field $H_F$.  The {\it conjugate of $x$} is $x^{p^{n/2}}$, and is often denoted $\bar{x}$.  Thus, if we extend $\bar{x}$ to mean $x^{p^{n/2}}$ for all $x \in \acF$, then $x \in H_F$ if and only if $\bar{x}=x$ and $x \in F$ if and only if $\bar{\bar{x}}=x$.
\end{definition}
Then elements whose conjugates equal their inverses can be thought of as forming a unit circle.
\begin{definition}[unit circle]
If $F$ is a finite field of characteristic $p$ and order $q=p^n$ with $n$ even, then the {\it unit circle of $F$}, denoted $U_F$, is the set $\{x \in F: x^{p^{n/2}}=x^{-1}\}$.  Equivalently, it is the set of elements in $F$ whose conjugates (using the conjugation map on $F$) equal their inverses.  It is also the unique subgroup of $\Fu$ (or $\acFu$) of order $p^{n/2}+1$, and it is a cyclic subgroup generated by $\alpha^{p^{n/2}-1}$ if $\alpha$ is a primitive element of $F$.
\end{definition}
Just as in the characteristic $0$ case, the intersection of the unit circle and the half field consists of the two points $1$ and $-1$.
\begin{lemma}\label{Lenny}
Let $F$ be a finite field of characteristic $p$ and order $p^n$ with $n$ even.  Then $H_F\cap U_F=\{1,-1\}$.
\end{lemma}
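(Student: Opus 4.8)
The plan is to work directly from the two descriptions recorded above: $H_F$ is the fixed field of the conjugation map $y \mapsto y^{p^{n/2}}$, and $U_F = \{x \in F : x^{p^{n/2}} = x^{-1}\}$ is a subgroup of $\Fu$. First I would check the inclusion $\{1,-1\} \subseteq H_F \cap U_F$. Both $1$ and $-1$ lie in $H_F$ since $H_F$ is a subfield and in particular contains the prime field $\Fp$. Both lie in $U_F$: we have $1^{p^{n/2}} = 1 = 1^{-1}$, and $(-1)^{p^{n/2}} = -1 = (-1)^{-1}$, the middle equality holding in either characteristic because $-1 \in \Fp$ is fixed by the conjugation map (and $(-1)^2 = 1$).

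For the reverse inclusion, suppose $x \in H_F \cap U_F$. Membership in $U_F$ forces $x \in \Fu$ with $x^{p^{n/2}} = x^{-1}$, while membership in $H_F$ gives $x^{p^{n/2}} = x$. Hence $x = x^{-1}$, so $x^2 = 1$, i.e.\ $(x-1)(x+1) = 0$, and since $F$ is an integral domain we conclude $x \in \{1,-1\}$. Combining the two inclusions yields $H_F \cap U_F = \{1,-1\}$.

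There is essentially no obstacle here; the argument is immediate once the descriptions of $H_F$ and $U_F$ are invoked. The only point that warrants a moment's attention is confirming $-1 \in U_F$ uniformly in $p$ (in characteristic $2$ it merely coincides with $1$), which the inclusion check above handles. As an alternative one could argue group-theoretically: $\Fu$ is cyclic of order $(p^{n/2}-1)(p^{n/2}+1)$, the nonzero elements of $H_F$ form its unique subgroup of order $p^{n/2}-1$ and $U_F$ its unique subgroup of order $p^{n/2}+1$, so the intersection is the unique subgroup of order $\gcd(p^{n/2}-1,\,p^{n/2}+1) = \gcd(2,\,p^{n/2}-1)$, which is $\{1,-1\}$ in every characteristic; but the fixed-field computation is shorter and I would present that one.
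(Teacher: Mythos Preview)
Your proof is correct and follows essentially the same approach as the paper: both argue that membership in $H_F$ gives $\bar{x}=x$ while membership in $U_F$ gives $\bar{x}=x^{-1}$, forcing $x=x^{-1}$ and hence $x\in\{1,-1\}$, with the reverse inclusion being immediate. Your write-up is slightly more explicit in checking the inclusion $\{1,-1\}\subseteq H_F\cap U_F$, but the underlying idea is identical.
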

\begin{proof}
Let $x\mapsto \bar{x}$ denote the conjugation map on $F$.
Then $H_F\cap U_F$ is the set of all $x \in F$ with both $\bar{x}=x$ and $\bar{x}=x^{-1}$, so that a necessary condition for being in the intersection is that $x=x^{-1}$, i.e., $x\in\{-1,1\}$.
This condition is also clearly sufficient since $1$ and $-1$ are self-conjugate and self-inverse.
\end{proof}

\subsection{The fiber doubling map}\label{Donald}

In this section, we introduce our double covering.
First we define the map.
\begin{definition}[Fiber doubling map]\label{Freddie}
Let $F$ be a finite field, $E$ the quadratic extension of $F$, and $U_E$ the unit circle of $E$.
We let $\du=\left\{(x,S): S \in \{\Fu,U_E\}, x \in S\right\}$ be the disjoint union of $\Fu$ and $U_E$.
Then the {\it fiber doubling map for $F$ is} $\dc \colon \du \to F$ with $\dc(x,S)=x+x^{-1}$ for every $(x,S) \in \du$.
\end{definition}
This validity of this definition actually requires us to know that $x+x^{-1} \in F$ when $x \in U_E$, which is clear since $x^{\cardF}=x^{-1}$ for every $x \in U_E$, so that $(x+x^{-1})^{\cardF}=x^{\cardF}+x^{-\cardF}=x^{-1}+x=x+x^{-1}$.
In fact, more will be said about this map in \cref{Gerald}, after we prove a preliminary lemma.
\begin{lemma}\label{Simon}
Let $K$ be a field and let $x,y \in K^*$.
Then $x+x^{-1}=y+y^{-1}$ if and only if $y \in \{x,x^{-1}\}$.
\end{lemma}
\begin{proof}
Notice that $x+x^{-1}=y+y^{-1}$ if and only if
\[
(x-y)\left(1-\frac{1}{x y}\right)=0.
\]
Since $K$ is a field, this is true if and only if $x-y=0$ or $x y=1$.
\end{proof}
Now we show that our fiber doubling map is indeed a double covering.
\begin{lemma}\label{Gerald}
Let $F$, $E$, $U_E$, $\du$, and $\dc$ be as in \cref{Freddie}.  Then $\dc$ is a $2$-to-$1$ map from $\du$ onto $F$.
\end{lemma}
\begin{proof}
We should keep in mind that since $F=H_E$, \cref{Lenny} shows that $\Fu\cap U_E=\{1,-1\}$.
Suppose that $\dc(x,S)=\dc(y,T)$ for two distinct elements $(x,S)$ and $(y,T)$ of $\du$.
Then $x+x^{-1}=y+y^{-1}$, so \cref{Simon} tells us that $y \in \{x,x^{-1}\}$.
If $y=x$, then we must have $S\not=T$ to maintain the distinctness of $(x,S)$ and $(y,T$), and then $y=x$ lies in $\Fu\cap U_E=\{1,-1\}$, and thus $y=x^{-1}$.
So in fact we must have $y=x^{-1}$ in all cases.
When $x \in \{1,-1\}$, then $y=x^{-1}=x$, so then $T$ must be whichever element of $\{\Fu,U_E\}$ is not $S$, but if $x \not\in\{1,-1\}$ then each of $x$ and $y=x^{-1}$ lies in one and only one of $\Fu$ or $U_E$ (and they lie in the same one, since $\Fu$ and $U_E$ are groups under multiplication), so that $S=T$.
Thus, we have shown that if $(x,S)$ lies in some fiber of $\dc$, then the only other element that could lie in the same fiber is $(x^{-1},T)$ where we must have $T\not=S$ when $x \in\{1,-1\}$ and we must have $T=S$ otherwise.
Thus, no fiber of $\dc$ has more than two elements, but since the domain of $\dc$ has twice as many elements as the codomain, this forces every fiber to have exactly two elements, making $\dc$ surjective.
\end{proof}
The following corollary shows how composition with the fiber doubling map transforms fibers.
\begin{corollary}\label{Edgar}
Let $F$, $E$, $U_E$, $\du$, and $\dc$ be as in \cref{Freddie}, and let $f\colon F \to F$ be a function.  Then for every $c \in F$, we have $|f^{-1}(\{c\})|=|(f\circ\dc)^{-1}(\{c\})|/2$, and $\dc\left((f\circ\dc)^{-1}(\{c\})\right)=f^{-1}(\{c\})$.
\end{corollary}
\begin{proof}
Since \cref{Gerald} tells us that $\dc$ is a $2$-to-$1$ map from $\du$ onto $F$, we note that $(f\circ\dc)^{-1}(\{c\})=\dc^{-1}\left(f^{-1}(\{c\})\right)$ will be twice as large as $f^{-1}(\{c\})$.
Furthermore, $\dc\left((f\circ\dc)^{-1}(\{c\})\right)=\dc\left(\dc^{-1}\left(f^{-1}(\{c\})\right)\right)=f^{-1}(\{c\})$ because $\dc$ is surjective.
\end{proof}

\section{Analysis of the fibers of the function of Lemma \ref{Dorothy} via the fiber doubling map}\label{Helen}

In \cref{Fabian}, we showed that when the discrete derivative $\Delta f$ of a power function $f$ with a fractional exponent is composed with selected permutations, one obtains the more tractable function $f_4$ described in \cref{Dorothy}.
This is not enough on its own to obtain the results of \cref{Victoria}, so in this section, we compose $f_4$ with the double covering $\dc$ of \cref{Donald} to further simply the functional form.
The three smaller sections within this section analyze $f_4\circ\dc$ in more and more specific cases.
First, \cref{Beverly} considers $f_4\circ\dc$ arising from $\Delta f$ where $f$ is a power function over a field of odd characteristic $p$ and order $q=p^n$ that has an exponent of the form $(p^j+1)/(p^k+1)$.
Then, \cref{Evelyn} specializes to the case where $j=n$, so that the exponent is of the form $(q+1)/(p^k+1)$.
Finally, \cref{Juliette} further restricts to the precise conditions of \cref{Victoria} by setting the characteristic $p$ to $3$ and insisting that $n$ be odd, $k$ be even, and $\gcd(k,n)=1$.

\subsection{When $d_1=(p^j+1)/2$ and $d_2=(p^k+1)/2$ in $x\mapsto x^{d_1/d_2}$ over a field of odd characteristic $p$}\label{Beverly}

In this section, we first algebraically simplify $f_4\circ\dc$ in \cref{Katherine}, and then demonstrate its symmetries in \cref{Thomas}.
\begin{proposition}\label{Katherine}
Let $F$ be a finite field of odd characteristic $p$, let $j$ and $k$ be nonnegative integers, let $d_1=(p^j+1)/2$ and $d_2=(p^k+1)/2$, and suppose that $\gcd(d_2,\cardFu)=1$.
Let $\theta=1$ if $p\equiv\pm 1\pmod{8}$ or $j \equiv k \pmod{2}$; otherwise let $\theta=-1$.
Let $f_4$ be as defined in \cref{Dorothy}.
Let $E$ be the quadratic extension of $F$, let $U_E$ be the unit circle of $E$, and let $\dc\colon\du \to F$ be the fiber doubling map.
If $(x,S) \in \du$, then $x^{p^k}+x\not=0$ and 
\[
(f_4\circ\dc)(x,S) = \theta \frac{(x^{p^j}+x)^{d_2}}{(x^{p^k}+x)^{d_1}}.
\]
\end{proposition}
\begin{proof}
The condition $\gcd(d_2,\cardFu)=1$ makes $f_4$ defined on all of $F$ by \cref{Dorothy}.
Let $(x,S) \in \du$.
We have
\begin{align*}
(f_4\circ\dc)(x,S)
& = f_4(x+x^{-1}) \\
& = \frac{((x+x^{-1}+2)^{d_1}-(x+x^{-1}-2)^{d_1})^{d_2}}{((x+x^{-1}+2)^{d_2}-(x+x^{-1}-2)^{d_2})^{d_1}},
\end{align*}
and we know the denominator does not vanish since $x+x^{-1} \in F$ and $x\mapsto x^{d_2}$ is a permutation of $F$ by \cref{Herbert} because $\gcd(d_2,\cardFu)=1$.
\cref{Rita} shows that the denominator is equal to $2^{d_1} x^{-d_1 d_2} (x^{p^k}+x)^{d_1}$, so we know that $x^{p^k}+x\not=0$.
\cref{Rita} also shows that the numerator is equal to $2^{d_2} x^{-d_1 d_2} (x^{p^j}+x)^{d_2}$, and so
\[
(f_4\circ\dc)(x,S) = 2^{d_2-d_1} \frac{(x^{p^j}+x)^{d_2}}{(x^{p^k}+x)^{d_1}}.
\]
Notice that $d_2-d_1 = (p^k-p^j)/2=(p^k-1)/2-(p^j-1)/2$, which by \cref{Molly} is congruent to $(k-j)(p-1)/2$ modulo $p-1$.
Thus, if $j$ and $k$ have the same parity, then $2^{d_2-d_1}=1$ by Fermat's little theorem.
So assume $j$ and $k$ are of opposite parity henceforth, and then $2^{d_2-d_1} = 2^{(p-1)/2}$, which is $1$ if $2$ is a quadratic residue in $\Fpu$ or $-1$ if $2$ is a quadratic nonresidue in $\Fpu$.
By the second supplement to the law of quadratic reciprocity, $2$ is a quadratic residue in $\Fpu$ if $p\equiv\pm 1\pmod{8}$, but is a quadratic nonresidue if $p\equiv\pm 3\pmod{8}$.
\end{proof}
The following result shows the symmetries of $f_4\circ\dc$ under additive and multiplicative inversion.
\begin{lemma}\label{Thomas}
Let $F$ be a finite field of odd characteristic $p$, let $j$ and $k$ be nonnegative integers, let $d_1=(p^j+1)/2$ and $d_2=(p^k+1)/2$, and suppose that $\gcd(d_2,\cardFu)=1$.
Let $\kappa=-1$ if $p\equiv 3\pmod{4}$ and $j$ is odd; otherwise let $\kappa=1$.
Let $f_4$ be as defined in \cref{Dorothy}.
If $x \in F$, then
\[
f_4(-x) = \kappa f_4(x).
\]
Let $E$ be the quadratic extension of $F$, let $U_E$ be the unit circle of $E$, and let $\dc\colon\du \to F$ be the fiber doubling map.
If $(x,S) \in \du$, then
\begin{align*}
(f_4\circ\dc)(1/x,S) & = (f_4\circ\dc)(x,S) \\
(f_4\circ\dc)(-x,S) & = \kappa(f_4\circ\dc)(x,S) \\
(f_4\circ\dc)(-1/x,S) & = \kappa(f_4\circ\dc)(x,S).
\end{align*}
\end{lemma}
\begin{proof}
Let $x \in F$.  By \cref{Dorothy}, we have $f_4(-x)=(-1)^{d_1-d_2}f_4(x)$.
Notice that $d_2$ is always odd because $\gcd(d_2,\cardFu)=1$.
\cref{Francis} tells us that $d_1=(p^j+1)/2$ is even if and only both $p\equiv 3 \pmod{4}$ and $j$ is odd.
Thus, $(-1)^{d_1-d_2}=\kappa$, so that $f_4(-x)=\kappa f_4(x)$.

Let $(x,S) \in \du$.
Since $\dc(1/x,S)=\dc(x,S)$, it is clear that $(f_4\circ\dc)(1/x,S) = (f_4\circ\dc)(x,S)$.
We have $(-x,S) \in\du$ since $\Fu$ and $U_E$ are groups under multiplication and contain $-1$.
Then $\dc(-x,S)=-\dc(x,S)$ and so we have
\begin{align*}
(f_4\circ\dc)(-x,S)
& = f_4(-\dc(x,S)) \\
& = \kappa f_4(\dc(x,S)),
\end{align*}
where the last equality has already been established earlier in this proof.
The final relation that we are to prove follows from the two that immediately precede it.
\end{proof}
\begin{example}
Let $F$ be the finite field of order $3^3$, and let $f\colon F \to F$ be the power function $f(x)=x^d$, where $d$ is the exponent $(3^3+1)/(3^2+1)=28/10=14/5$, where we note that $\gcd(5,3^3-1)=1$.
In an example in the Introduction, we saw that $f(x)=x^8$.
Let $f_4$ be as defined in \cref{Dorothy}.
Then \cref{Sven} shows that
\[
f_4(x)          = \frac{\left((x+2)^{14}-(x-2)^{14}\right)^5}{\left((x+2)^5-(x-2)^5\right)^{14}}.
\]
Let $E$ be the quadratic extension of $F$, that is, the finite field of order $3^6$, let $U_E$ be the unit circle of $E$, and let $\dc \colon \Fu \sqcup U_E \to F$ be the fiber doubling map from \cref{Freddie}.
Then \cref{Katherine} shows that for $(x,S) \in \Fu\sqcup U_E$, we have
\[
(f_4\circ\dc)(x,S) = -\frac{(x^{27}+x)^{5}}{(x^9+x)^{14}}.
\]
\cref{Thomas} shows that $f_4(-x)=-f_4(x)$ for $x \in F$ and that $(f_4\circ\dc)(1/x,S)=(f_4\circ\dc)(x,S)$ and $(f_4\circ\dc)(-x,S)=(f_4\circ\dc)(-1/x,S)=-(f_4\circ\lambda)(x,S)$ for $(x,S) \in \Fu\sqcup U_E$, which are facts that can also be checked directly using the explicit forms above.
\end{example}

\subsection{When $d_1=(q+1)/2$ and $d_2=(p^k+1)/2$ in $x\mapsto x^{d_1/d_2}$ over a field of odd characteristic $p$ and order $q$}\label{Evelyn}
In this section, we restrict the numerator of our power function's fractional exponent $(p^j+1)/(p^k+1)$ to have $p^j=q$, where $q$ is the order of the field on which the power function is defined.
This specialization allows for further algebraic simplification of $f_4\circ\dc$ to be seen in \cref{Irene}, which also looks at the square of this composed function, which plays a role in later calculations.
Then \cref{Uriah} analyzes the fibers of a function that appears as a term in part \ref{Julia} of \cref{Irene}, and \cref{Madeleine} finishes with the determination of a special fiber of $f_4\circ\dc$.
\begin{proposition}\label{Irene}
Let $F$ be a finite field of odd characteristic $p$ and order $q=p^n$, let $k$ be a nonnegative integer, let $d_1=(p^n+1)/2$, $d_2=(p^k+1)/2$, $e_2=(p^k-1)/2$, and suppose that $\gcd(d_2,\cardFu)=1$.
Let $f_4$ be as defined in \cref{Dorothy}.
Let $E$ be the quadratic extension of $F$, let $U_E$ be the unit circle of $E$, and let $\dc\colon\du \to F$ be the fiber doubling map.
\begin{enumerate}[label=(\roman*)]
\item\label{Julius}
Let $x \in \Fu$ and let $\mu=1$ if $q\equiv\pm 1\pmod{8}$; otherwise let $\mu=-1$.
Then $x^{e_2}+x^{-e_2}\not=0$ and 
\begin{align*}
(f_4\circ\dc)(x,\Fu) & = 2 \mu x^{(q-1)/2} \left(\frac{1}{x^{e_2}+x^{-e_2}}\right)^{(q+1)/2} \not=0 \text{ and}\\
\left((f_4\circ\dc)(x,\Fu)\right)^2 & = \left(\frac{2}{x^{e_2}+x^{-e_2}}\right)^2 = 1-\left(\frac{x^{e_2}-x^{-e_2}}{x^{e_2}+x^{-e_2}}\right)^2\not=0.
\end{align*}
\item\label{Julia}
Let $x \in U_E$ and let $\theta=1$ if $p\equiv\pm 1\pmod{8}$ or $n \equiv k \pmod{2}$; otherwise let $\theta=-1$.
Then $x^{p^k}+x\not=0$ and 
\begin{align*}
(f_4\circ\dc)(x,U_E) & = \theta\frac{(x+x^{-1})^{(p^k+1)/2}}{(x^{p^k}+x)^{(q+1)/2}} \text{ and}\\
\left((f_4\circ\dc)(x,U_E)\right)^2 & = 1+ \left(\frac{x^{p^k+1}-1}{x^{p^k}+x}\right)^2.
\end{align*}
\end{enumerate}
\end{proposition}
\begin{proof}
Since $\gcd(d_2,\cardFu)=1$ we know that $d_2$ is odd and \cref{Dorothy} tells us that $f_4$ is defined.
We adopt the definition of $\theta$ given in part \ref{Julia} of this proposition throughout this proof.
Then \cref{Katherine} with $j=n$ tells us that if $(x,S) \in \du$, then $x^{p^k}+x\not=0$ and 
\begin{equation}\label{Manfred}
(f_4\circ\dc)(x,S) = \theta\frac{(x^q+x)^{d_2}}{(x^{p^k}+x)^{(q+1)/2}}.
\end{equation}

When $S=\Fu$ in \eqref{Manfred}, we have $x \in \Fu$, so $x^q=x$, and we obtain
\[
(f_4\circ\dc)(x,\Fu) = \theta\frac{(x+x)^{d_2}}{(x^{p^k}+x)^{(q+1)/2}}.
\]
Since we know that $x^{p^k}+x=x^{d_2} (x^{e_2}+x^{-e_2})$ is nonzero, this shows that $x^{e_2}+x^{-e_2}\not=0$ and so
\begin{align*}
(f_4\circ\dc)(x,\Fu) 
& = \theta  \frac{(2 x)^{d_2}}{\left(x^{d_2} (x^{e_2}+x^{-e_2})\right)^{(q+1)/2}} \\
& = 2^{d_2} \theta  x^{-d_2(q-1)/2} \left(\frac{1}{x^{e_2}+x^{-e_2}}\right)^{(q+1)/2} \\
& = 2^{d_2} \theta  x^{(q-1)/2} \left(\frac{1}{x^{e_2}+x^{-e_2}}\right)^{(q+1)/2},
\end{align*}
where the last equality uses the fact that $x \in \Fu$ and $d_2$ is odd.
Let $r$ be a square root of $2$ in the algebraic closure of $F$.
If $p\equiv\pm 1\pmod{8}$, then $r\in\Fp$ by the second supplement to the law of quadratic reciprocity, and then $r^p=r$; otherwise $r\in\F_{p^2}\smallsetminus\Fp$, and then $r^p=-r$.
Thus, $2^{d_2} = r^{p^k+1} = r^2=2$ if $p\equiv\pm 1\pmod{8}$ or if $k$ is even; otherwise $2^{d_2} = r^{p^k}r=-r^2=-2$.
From this and the definition of $\theta$, we deduce that $2^{d_2} \theta =2$ if $p\equiv\pm 1\pmod{8}$ or $n$ is even; otherwise $2^{d_2} \theta =-2$.
That is, $2^{d_2} \theta = 2 \mu$, and we obtain the desired formula for $(f_4\circ\dc)(x,\Fu)$.
The term $1/(x^{e_2}+x^{-e_2})$ in that formula cannot be zero, so that both it and $x$ lie in $\Fu$, and so $\left((f_4\circ\dc)(x,\Fu)\right)^2 =4 x^{q-1} (x^{e_2}+x^{-e_2})^{-(q+1)} = \left(2/(x^{e_2}+x^{-e_2})\right)^2$.
We finish the proof of part \ref{Julius} by observing that
\begin{align*}
\left(\frac{2}{x^{e_2}+x^{-e_2}}\right)^2
& = \frac{(x^{e_2}+x^{-e_2})^2-(x^{e_2}-x^{-e_2})^2}{(x^{e_2}+x^{-e_2})^2} \\
&=1-\left(\frac{x^{e_2}-x^{-e_2}}{x^{e_2}+x^{-e_2}}\right)^2.
\end{align*}

When $S=U_E$ in \eqref{Manfred}, we have $x \in U_E$, so $x^q=x^{-1}$, and we obtain
\[
(f_4\circ\dc)(x,U_E) = \theta\frac{(x+x^{-1})^{(p^k+1)/2}}{(x^{p^k}+x)^{(q+1)/2}}.
\]
For the rest of this proof, we use the notation $\tilde{z}$ as a shorthand for $z^{p^k}$.
Then 
\begin{align*}
((f_4\circ\dc)(x,U_E))^2
& = \frac{(x+x^{-1})^{p^k+1}}{(\tilde{x}+x)^{q+1}} \\
& = \frac{(x+x^{-1}) (\tilde{x}+\tilde{x}^{-1})}{(\tilde{x}+x)(\tilde{x}^q+x^q)} \\
& = \frac{(x+x^{-1}) (\tilde{x}+\tilde{x}^{-1})}{(\tilde{x}+x)(\tilde{x}^{-1}+x^{-1})} && \text{since $x \in U_E$} \\
& = \frac{(x^2+1) (\tilde{x}^2+1)}{(\tilde{x}+x)^2} \\
& = \frac{(x+\tilde{x})^2+(x\tilde{x}-1)^2}{(\tilde{x}+x)^2} \\
& = 1+ \left(\frac{x\tilde{x}-1}{\tilde{x}+x}\right)^2,
\end{align*}
which is the formula we were to prove.
\end{proof}
\begin{example}
Let $F$ be the finite field of order $3^3$, and let $f\colon F \to F$ be the power function $f(x)=x^d$, where $d$ is the exponent $(3^3+1)/(3^2+1)=28/10=14/5$, where we note that $\gcd(5,3^3-1)=1$.
In an example in the Introduction, we saw that $f(x)=x^8$.
Let $f_4$ be as defined in \cref{Dorothy}.
Then \cref{Irene} shows that for $x \in \Fu$, we have
\begin{align*}
(f_4\circ\dc)(x,\Fu) & = x^{13} \left(\frac{1}{x^{4}+x^{-4}}\right)^{14} \not=0 \text{ and}\\
\left((f_4\circ\dc)(x,\Fu)\right)^2 & = \left(\frac{2}{x^{4}+x^{-4}}\right)^2 = 1-\left(\frac{x^{4}-x^{-4}}{x^{4}+x^{-4}}\right)^2\not=0,
\end{align*}
but if $x \in U_E$, then 
\begin{align*}
(f_4\circ\dc)(x,U_E) & = -\frac{(x+x^{-1})^{5}}{(x^{9}+x)^{14}}\text{ and}\\
\left((f_4\circ\dc)(x,U_E)\right)^2 & = 1+ \left(\frac{x^{10}-1}{x^{9}+x}\right)^2.
\end{align*}
\end{example}
The following result does much of the heavy work in our later analysis.
Its function $h$ appears as the main term in the expression for $\left((f_4\circ\dc)(x,U_E)\right)^2$ in part \ref{Julia} of \cref{Irene}, and tracking the fibers of $h$ (and its square root $g$) will eventually enable a complete determination of the fibers of $f_4\circ\dc$ in \cref{Edward} in \cref{Juliette}.
\begin{proposition}\label{Uriah}
Let $F$ be a finite field of odd characteristic $p$ and order $q=p^n$ with $n$ odd, and let $\eta$ be the extended quadratic character of $F$.
Let $E$ be the quadratic extension of $F$, let $i$ and $-i$ be the primitive fourth roots of unity in $E$, let $U_E$ be unit circle of $E$, and let $x\mapsto \sqrt{x}$ be any function from $E$ to $E^{\alg}$ that maps each $x \in E$ to one of its square roots.
Let $k$ be a nonnegative even integer, let $B=\{x \in F: \eta(x)\not=1, \eta(x+1)\not=-1\}$, and let $\Gamma=\{x \in E: x^2 \in B\}$.
Then there are surjective maps $g\colon U_E \to \Gamma$ and $h\colon U_E \to B$ with $g(x)=(x^{p^k+1}-1)/(x^{p^k}+x)$ and $h(x)=g(x)^2$.
Let
\begin{align*}
r_1(x) & = \prod_{j=0}^{n-1} \left((-1)^j x^{3^{j k}}+1\right), \\
r_2(x) & = \prod_{j=0}^{n-1} \left((-1)^j x^{3^{j k}}-1\right)\text{, and} \\
r_3(x) & = \frac{r_1(x)+r_2(x)}{r_1(x)-r_2(x)}
\end{align*}
be in $\Fthree(x)$.
Then there is a surjective map $u\colon \Gamma \to U_E$ with
\[
u(\gamma)=\begin{cases}
\gamma & \text{if $\gamma \in \{i,-i\}$} \\
i r_3(-i\gamma)+\sqrt{1-r_3(-i\gamma)^2} & \text{otherwise,}
\end{cases}
\]
where $r_1(-i \gamma)\not=r_2(-i \gamma)$ when $\gamma\in\Gamma\smallsetminus\{i,-i\}$, so that $r_3(-i \gamma)$ is always defined in these cases.
We have
\[
-\frac{1}{u(\gamma)}=\begin{cases}
\gamma & \text{if $\gamma \in \{i,-i\}$} \\
i r_3(-i \gamma)-\sqrt{1-r_3(-i \gamma)^2} & \text{otherwise}
\end{cases}
\]
for every $\gamma\in\Gamma$.
Furthermore, for every $\gamma \in \Gamma$, we have $g^{-1}(\{\gamma\})=\{u(\gamma),-1/u(\gamma)\}$; and for every $\beta \in B$, we have
\[
h^{-1}(\{\beta\})=\{u(\sqrt{\beta}),-u(\sqrt{\beta}),1/u(\sqrt{\beta}),-1/u(\sqrt{\beta})\}.
\]
\end{proposition}
\begin{proof}
We use the notation $\tilde{z}$ as a shorthand for $z^{p^k}$ in this proof.
Note that $U_E$ is a group under multiplication and is of even order, so it contains $-1$.
Thus, whenever $x \in U_E$, we also have $-x$, $1/x$, and $-1/x$ in $U_E$.

We break the rest of the proof into steps to prove the various claims.
\begin{step}
{\bf The rules for $g$ and $h$ map $U_E$ into $E$:}
Since $n$ is odd and $k$ is even, \cref{Clementine}\ref{Arthur} tells us that $\gcd((p^k+1)/2,p^n-1)=1$, so we may apply \cref{Irene}\ref{Julia}, which tells us that $x^{p^k}+x\not=0$ for $x \in U_E$, and so the rules for $g$ and $h$ do at least give defined outputs in $E$.
\end{step}
\begin{step}
{\bf Symmetries of $g$:}
It will be important to note that for every $x \in U_E$ we have
\begin{align}\begin{split}\label{Priscilla}
g(-x)     & = -g(x) \\
g(x^{-1})  & = -g(x) \\ 
g(-x^{-1}) & =  g(x).
\end{split}\end{align}
The first identity is clear since our formula gives $g(x)$ as the quotient of polynomials where all terms in the numerator (resp., denominator) are of  even (resp., odd) degree.
The second identity of \eqref{Priscilla} is straightforward to check:
\[
g(x^{-1}) = \frac{x^{-1} \tilde{x}^{-1}-1}{\tilde{x}^{-1}+x^{-1}} = \frac{1-x \tilde{x}}{x+\tilde{x}} = -g(x),
\]
and then the third identity of \eqref{Priscilla} follows from the first and second.
\end{step}
\begin{step}
{\bf The rules for $h$ and $g$ map into $B$ and $\Gamma$, respectively:}
Let $x \in U_E$.
Since $0$ is clearly in $B$, we are finished if $h(x)=0$.
Thus, we assume that $h(x)\not=0$ (and so $g(x)\not=0$), and calculate
\begin{align*}
h(x)^{(q-1)/2} 
& = g(x)^{q-1} \\
& = g(x)^q g(x)^{-1} \\
& = g(x^q) g(x)^{-1} \\
& = g(x^{-1}) g(x)^{-1} && \text{since $x \in U_E$}\\
& = -1 && \text{by \eqref{Priscilla}}.
\end{align*}
Thus, $h(x)^{q-1}=1$, proving that $h(x) \in \Fu$, but $h(x)^{(q-1)/2}=-1$ shows that $\eta(h(x))=-1$.
Furthermore, \cref{Irene}\ref{Julia} shows that $h(x)+1$ is the square of some element of $F$, and so $\eta(h(x)+1)\not=-1$.  Thus, $h(x) \in B$.
So $h$ maps $U_E$ into $B$, and therefore $g$ maps $U_E$ into $\Gamma$.
\end{step}
\begin{step}\label{Giselle}
{\bf The fiber of $g$ containing $x$ is $\{x,-1/x\}$:}
If $x \in U_E$, then \eqref{Priscilla} tells us that the fiber of $g$ that contains $x$ must also contain $-1/x$.
Suppose that this fiber also contains some $y\not\in\{x,-1/x\}$ to show contradiction.
Then $y \in U_E$ since $U_E$ is the domain of $g$, and $g(x)=g(y)$, so that
\[
\frac{x\tilde{x}-1}{\tilde{x}+x} = \frac{y\tilde{y}-1}{\tilde{y}+y},
\]
so that
\[
(x\tilde{x}-1) (\tilde{y}+y) = (y\tilde{y}-1) (\tilde{x}+x),
\]
which rearranges to
\[
(\tilde{x} \tilde{y}+1)(x-y)=-(x y+1) (\tilde{x}-\tilde{y}),
\]
in other words,
\[
(x y+1)^{p^k} (x-y) = - (x y+1) (x-y)^{p^k},
\]
and since we are assuming $y \not\in \{x,-1/x\}$, we obtain
\begin{equation}\label{Wolfgang}
\left(\frac{x y+1}{x- y}\right)^{p^k-1} = -1.
\end{equation}
Since $x,y \in U_E \subseteq E$ and since $y\not \in \{x,-1/x\}$, we know that $(x y+1)/(x-y) \in \Eu$.
Since $\Eu$ is a cyclic group of order $p^{2 n}-1$, we see that $\left((x y+1)/(x- y)\right)^{p^k-1}$ lies in the unique subgroup $G$ of order $(p^{2 n}-1)/\gcd(p^{2 n}-1,p^k-1)=(p^{2 n}-1)/(p^{\gcd(2 n,k)}-1)$ (see \cref{Clementine}\ref{Hortense}) in $\Eu$.
On the other hand, $-1$ is an element of order $2$, and lies in $G$ if and only if $2 \mid (p^{2 n}-1)/(p^{\gcd(2 n,k)}-1)$, so \eqref{Wolfgang} implies $v_2(p^{2 n}-1) > v_2(p^{\gcd(2 n,k)}-1)$.
Since \cref{Francis} shows that $v_2(p^j-1)$ increases strictly as $v_2(j)$ increases, the last inequality implies that $v_2(2 n) > v_2(\gcd(2 n,k))$, which implies that $v_2(2 n) > v_2(k)$, which contradicts our assumption that $n$ is odd and $k$ is even.
Thus, the fiber of $g$ that contains $x$ is $\{x,-1/x\}$.
\end{step}
\begin{step}\label{Susan}
{\bf $g$ and $h$ are surjective:}
If $x \in E$, then (since we are working in odd characteristic) the set $\{x,-1/x\}$ has precisely two elements unless $x \in \{i,-i\}$.
Since $U_E$ is a multiplicative group of cardinality $q+1$, we see that $U_E$ either contains both $i$ and $-i$ when $q\equiv 3\pmod{4}$ or contains neither of them when $q\equiv 1 \pmod{4}$.
So, if $q \equiv 1 \pmod{4}$, \cref{Giselle} implies that $g$ has $(q+1)/2$ nonempty fibers, each of size $2$, while if $q \equiv 3 \pmod{4}$, then $g$ has $(q-1)/2$ fibers of size $2$, and $2$ fibers of size $1$, for a total of $(q+3)/2$ nonempty fibers.
The number of nonempty fibers of $g$ matches exactly the cardinality of its codomain $\Gamma$ (determined in \cref{James}), so $g$ is surjective.
Since every element of $B$ has a square root in $\Gamma$, this also shows that $h$ is surjective.
\end{step}
\begin{step}\label{Abigail}
{\bf The fiber of $g$ over $i$ or $-i$:}
\cref{James} tells us that $\Gamma$ contains $i$ and $-i$ if and only if $q\equiv 3 \pmod{4}$; this is also the necessary and sufficient condition for $U_E$ to contain $i$ and $-i$.
Note that $\tilde{i}=i$ since $x\mapsto \tilde{x}$ is an automorphism iterated an even number of times.
Thus, when $q\equiv 3 \pmod{4}$, we have $g(i)=(i\cdot i -1)/(i+i)=i$ and then $g(-i)=-i$ by \eqref{Priscilla}, so that \cref{Giselle} makes $g^{-1}(\{i\})=\{i,-1/i\}=\{i\}$ and $g^{-1}(\{-i\})=\{-i,1/i\}=\{-i\}$.
\end{step}
\begin{step}
{\bf An element of the fiber of $g$ over some $\gamma\in\Gamma\smallsetminus\{i,-i\}$ is fixed by a certain linear fractional transformation:}
Let $\gamma \in \Gamma\smallsetminus\{i,-i\}$.
Using the surjectivity of $g$ proved in \cref{Susan}, let $y \in U_E$ with $g(y)=\gamma$, that is, $(\tilde{y}y-1)/(\tilde{y}+y)=\gamma$.
Rearranging this, we obtain the equivalent
\begin{equation}\label{Percy}
\tilde{y}(y-\gamma)=\gamma y+1.
\end{equation}  
Note that $y\not=\gamma$, for otherwise we would get $0=\gamma^2+1$, but $\gamma\not\in\{i,-i\}$.
So we can rearrange \eqref{Percy} to obtain the equivalent
\begin{equation}\label{Egbert}
\tilde{y} = \frac{\gamma y+1}{y-\gamma}.
\end{equation}
Now let $\phi\colon E \to E$ with $\phi(x)=\tilde{x}=x^{p^k}$ so that $\phi^j(x)=x^{p^{j k}}$.
For every nonsingular $2\times 2$ matrix
\[
M=\begin{pmatrix} M_{11} & M_{12} \\ M_{21} & M_{22} \end{pmatrix}
\]
with entries in $E$, let $F_M\colon E\cup\{\infty\} \to E\cup\{\infty\}$ be the linear fractional transformation $F_M(x)=(M_{11} x + M_{12})/(M_{21} x + M_{22})$.
If $M$ and $N$ are both nonsingular $2\times 2$ matrices, it is well known that $F_M \circ F_N=F_{M N}$.
Let $I$ be the $2\times 2$ identity matrix and let
\[
J=\begin{pmatrix} 0 & 1 \\ 1 & 0 \end{pmatrix} \text{ and } K=\begin{pmatrix} 1 & 0 \\ 0 & -1\end{pmatrix}.
\]
Then \eqref{Egbert} says that $\phi(y)=F_{\gamma K + J}(y)$, where we note that $\det(\gamma K+J)=-(\gamma^2+1)\not=0$ (since $\gamma\not\in\{i,-i\}$).
Proceeding by induction, one then proves that for every $j \in \N$, we have
\begin{equation}\label{Bartholomew}
\phi^j(y) = F_{\left(\phi^{j-1}(\gamma) K + J\right) \cdots \left(\phi(\gamma) K + J\right) \left(\gamma K + J\right)}(y),
\end{equation}
where case with $j=1$ has already been shown and if $j>1$, then one can fall back on the earlier cases to compute
\begin{align*}
\phi^j(y) 
& = \phi^{j-1}(F_{\gamma K + J}(y)) \\
& = F_{\phi^{j-1}(\gamma) K + J}(\phi^{j-1}(y)) \\
& = F_{\phi^{j-1}(\gamma) K + J}\left(F_{\left(\phi^{j-2}(\gamma) K + J\right) \cdots \left(\phi(\gamma) K + J\right) (\gamma K + J)}(y)\right),
\end{align*}
and then use the aforementioned identity $F_M\circ F_N=F_{M N}$ with the last formula to obtain \eqref{Bartholomew}.
Since $\phi^n(y)=y^{p^{n k}}$ and $k$ is even and $y \in U_E \subseteq E=\F_{p^{2 n}}$, we know that
\begin{equation}\label{Basil}
y=\phi^n(y)=F_L(y),
\end{equation}
where
\begin{equation}\label{Gregory}
L=\left(\phi^{n-1}(\gamma) K + J\right) \cdots \left(\phi(\gamma) K + J\right) \left(\gamma K + J\right).
\end{equation}
\end{step}
\begin{step}\label{Samuel}
{\bf A simpler formula for the nonsingular matrix $L$:}
Each multiplicand $\phi^j(\gamma) K+J$ in \eqref{Gregory} is nonsingular, since it is the image under an automorphism of the nonsingular matrix $\gamma K+J$.
Thus, $L$ is a nonsingular matrix.
Now notice that $J^2=I$, so that for any $j$ we have $\phi^j(\gamma) K + J= J \left(\phi^j(\gamma) J K  + I\right)$, and thus
\[
L=J \left(\phi^{n-1}(\gamma) J K  + I\right) \cdots J \left(\phi(\gamma) J K  + I\right) J \left(\gamma J K  + I\right),
\]
Then notice that $J (J K) J = J^2 K J = K J = -J K$, so that we can group terms as $J(\phi^j(\gamma) J K + I)J=-\phi^j(\gamma) J K +I$ for odd $j$, and simply keep the terms $\phi^j(\gamma) J K +I$ for even $j$ to obtain
\[
L=J \prod_{j=0}^{n-1} \left((-1)^j \phi^j(\gamma) J K  + I\right),
\]
where the product from $0$ to $n-1$ is of multiplicands that commute with each other.
It is convenient to introduce factors of $-i$ and $i$ to obtain
\[
L=J \prod_{j=0}^{n-1} \left((-1)^j \phi^j(- i \gamma) i J K  + I\right),
\]
where we are using the fact the $\phi$ fixes $-i$ since it is an automorphism iterated an even number of times.
Now let $\sigma_\ell(x_0,\ldots,x_{n-1})$ be the $\ell$th elementary symmetric function of $n$ variables, so that we have
\[
L=J \sum_{\ell=0}^{n} \sigma_{\ell}((-1)^0 \phi^0(-i \gamma),\ldots,(-1)^{n-1} \phi^{n-1}(-i \gamma)) (i J K)^\ell.
\]
Now notice that $(i J K)^2 = - J K J K = J J K K = I$, so that if we define
\[
a_j= \sums{0 \leq \ell < n \\ \ell \equiv j \!\!\!\! \pmod{2}} \sigma_{\ell}((-1)^0 \phi^0(-i \gamma),\ldots,(-1)^{n-1} \phi^{n-1}(-i \gamma)).
\]
for $j \in \{0,1\}$, then we have
\begin{equation}\label{Ernest}
L=J(a_0 I + a_1 i J K)=a_0 J + i a_1 K.
\end{equation}
\end{step}
\begin{step}\label{Veronica}
{\bf Elements of the fiber of $g$ over some $\gamma\in\Gamma\smallsetminus\{i,-i\}$ in terms of $a_0$ and $a_1$:}
Using \eqref{Ernest} in \eqref{Basil}, we obtain
\[
y = \frac{i a_1 y + a_0}{a_0 y - i a_1},
\]
which rearranges to give $a_0 y^2 - 2 i a_1 y - a_0 = 0$.
We cannot have $a_0=0$, since that would force $a_1=0$ (because $y \in U_E$ cannot be zero), and that would make the nonsingular matrix $L$ from \cref{Samuel} vanish.
Therefore, $y$ satisfies the quadratic equation
\[
x^2 - \frac{2 i a_1}{a_0} x -1,
\]
the product of whose roots equals $-1$.
We have shown that the fiber $g^{-1}(\{\gamma\})$ is a subset of the root set of this polynomial, and since we know that $g^{-1}(\{\gamma\})$ is nonempty (by \cref{Susan}) and closed under the map $x \mapsto -1/x$ (by \eqref{Priscilla}), the fiber must be the entire root set of the polynomial.
Thus,
\[
g^{-1}(\{\gamma\})= \left\{i\frac{a_1}{a_0} + \sqrt{1-\left(\frac{a_1}{a_0}\right)^2}, i \frac{a_1}{a_0}-\sqrt{1-\left(\frac{a_1}{a_0}\right)^2}\right\}.
\]
\end{step}
\begin{step}\label{Wilfred}
{\bf $a_0$, $a_1$, and $a_1/a_0$ written in terms of $r_1$, $s_2$, and $r_3$:}
Now we note that
\begin{align*}
r_1(-i \gamma) & = \sum_{\ell=0}^{n-1} \sigma_\ell((-1)^0 \phi^0(-i \gamma),\ldots,(-1)^{n-1} \phi^{n-1}(-i \gamma)) \\
r_2(-i \gamma) & = \sum_{\ell=0}^{n-1} (-1)^{n-\ell} \sigma_\ell((-1)^0 \phi^0(-i \gamma),\ldots,(-1)^{n-1} \phi^{n-1}(-i \gamma)),
\end{align*}
and $n$ is odd, so that $r_1(-i \gamma)+r_2(-i \gamma)=2 a_1$ and $r_1(-i \gamma)-r_2(-i \gamma)=2 a_0$.
Since we proved that $a_0\not=0$ in \cref{Veronica}, we know that $r_1(-i \gamma)\not=r_2(-i \gamma)$ when $\gamma\in\Gamma\smallsetminus\{i,-i\}$.
So then $r_3(-i \gamma)$ is defined and equals $a_1/a_0$.
\end{step}
\begin{step}\label{Yuri}
{\bf Elements of the fiber of $g$ in terms of $\gamma$ and $u$:}
Combining Steps \ref{Veronica} and \ref{Wilfred}, we obtain
\[
g^{-1}(\{\gamma\}) = \left\{i r_3(-i \gamma) + \sqrt{1-r_3(-i \gamma)^2}, i r_3(-i \gamma)-\sqrt{1-r_3(-i \gamma)^2}\right\}
\]
when $\gamma\in\Gamma\smallsetminus\{i,-i\}$.
Along with \cref{Abigail}, this shows that the rule for the function $u$ in the statement of this proposition does indeed map $\Gamma$ into $U_E$ (the domain of $g$), and one can check that the formula for $u(\gamma)$ and the formula claimed for $-1/u(\gamma)$ always have a product of $-1$, thus validating the latter formula.
So our results here and in \cref{Abigail} show us that $g^{-1}(\{\gamma\})=\left\{u(\gamma),-1/u(\gamma)\right\}$, and since fibers of $g$ must cover the domain of $g$, we see that $u$ must be surjective.
\end{step}
\begin{step}
{\bf Fibers of $h$:}
Now suppose that $\beta \in B$.
Since $h(x)=g(x)^2$ for all $x \in U_E$, we have $h^{-1}(\{\beta\})=g^{-1}(\{\sqrt{\beta}\}) \cup g^{-1}(\{-\sqrt{\beta}\})$.
Steps \ref{Abigail} and \ref{Yuri} show that $g^{-1}(\{\sqrt{\beta}\})=\{u(\sqrt{\beta}),-1/u(\sqrt{\beta})\}$, and \eqref{Priscilla} shows that the elements of $g^{-1}(\{-\sqrt{\beta}\})$ are the opposites of the elements of $g^{-1}(\{\sqrt{\beta}\})$.\qedhere
\end{step}
\end{proof}
The fiber of $f_4\circ\dc$ over $0$ has special behavior, and so it is expedient to determine it directly, as we do in the following result.
\begin{lemma}\label{Madeleine}
Let $F$ be a finite field of odd characteristic $p$ and order $p^n$, let $k$ be a nonnegative integer, let $d_1=(p^n+1)/2$, $d_2=(p^k+1)/2$, and suppose that $\gcd(d_2,\cardFu)=1$.
Let $f_4$ be as defined in \cref{Dorothy}.
Let $E$ be the quadratic extension of $F$, let $i$ and $-i$ be the primitive fourth roots of unity in $E$, let $U_E$ be the unit circle of $E$, and let $\dc\colon\du \to F$ be the fiber doubling map.
The fiber $(f_4\circ\dc)^{-1}(\{0\})$ is empty if $q \equiv 1 \pmod{4}$ and equals $\{(i,U_E),(-i,U_E)\}$ if $q \equiv 3 \pmod{4}$.
\end{lemma}
\begin{proof}
By \cref{Irene}\ref{Julius}, there are no elements of the form $(b,\Fu)$ in $(f_4\circ\dc)^{-1}(\{0\})$.
By \cref{Irene}\ref{Julia}, an $x \in U_E$ has $(f_4\circ\dc)(x,U_E)=0$ if and only if $x^{-1}+x=0$, which is true if and only if $x^2+1=0$, whose solutions are the primitive fourth roots of unity, which reside in the cyclic group $U_E$ of order $q+1$ if and only if $4 \mid q+1$.
\end{proof}

\subsection{When $d_1=(q+1)/2$ and $d_2=(3^k+1)/2$ in $x\mapsto x^{d_2/d_1}$ over a field of characteristic $3$ and order $q$}\label{Juliette}
In this section, we further specialize the field which is the domain of our power function to be of characteristic $3$.
The two results here also assume that its order is $q=3^n$ for an odd $n$ and that our power function's fractional exponent is $(q+1)/(3^k+1)$ where $k$ is even and $\gcd(k,n)=1$.
Our first result, \cref{William}, uses symmetry to help us determine the contents of a fiber from partial knowledge of it, and our second result, \cref{Edward}, is a complete determination of the fibers of $f_4\circ\dc$ and $f_4$ with polynomial formulae for the elements in each fiber.
The latter result is used in \cref{Prospero} to determine the fibers of the discrete derivative $\Delta f$.
\begin{lemma}\label{William}
Let $F$ be a finite field of order $q=3^n$ with $n$ odd, let $k$ be an even nonnegative integer with $\gcd(k,n)=1$, let $d_1=(3^n+1)/2$, and let $d_2=(3^k+1)/2$.
Let $f_4$ be as defined in \cref{Dorothy}.
Let $E$ be the quadratic extension of $F$, let $U_E$ be the unit circle of $E$, and let $\dc\colon\du \to F$ be the fiber doubling map.
If $(a,S) \in \du$, then the fiber of $f_4\circ\dc$ that contains $(a,S)$ also contains $(1/a,S)$ but contains no other element of the form $(b,S)$ with $b\not\in \{a,1/a\}$.
\end{lemma}
\begin{proof}
The conditions on $n$ and $k$ make $q\equiv 3\pmod{4}$ and $\gcd(d_2,q-1)=1$ (see \cref{Clementine}\ref{Arthur}), so that we are able to apply \cref{Quentin} and \cref{Irene} in our proof.
Suppose that $S \in \{\Fu,U_E\}$ and that $a,b \in S$ with $(b,S)$ is in the same fiber of $f_4\circ\dc$ as $(a,S)$.

First consider the case where $S=\Fu$, so that $a,b \in \Fu$.
Then $(f_4\circ\dc(b,\Fu))^2=((f_4\circ\dc)(a,\Fu))^2$, so that by \cref{Irene}\ref{Julius} we have $(b^{e_2}+b^{-e_2})^2 = (a^{e_2}+a^{-e_2})^2$, where $e_2=(3^k-1)/2$.
Thus, $b^{e_2}+b^{-e_2} = s (a^{e_2}+a^{-e_2})=(s a^{e_2})+(s a^{e_2})^{-1}$ for some $s \in \{1,-1\}$.
If we had $s=-1$, then by \cref{Simon}, we have $b^{e_2} \in \{-a^{e_2},-a^{-e_2}\}$, i.e., either $(b/a)^{e_2}=-1$ or else $(b a)^{e_2}=-1$; since $e_2$ is even by \cref{Francis}, this would make $-1$ a quadratic residue in $\Fu$, contradicting \cref{Quentin}.
Thus, we must have $s=1$, and so \cref{Simon} shows that $b^{e_2} \in \{a^{e_2},a^{-e_2}\}$.
So \cref{Yolanda} implies that $b \in \{a,-a,1/a,-1/a\}$.

Now consider the case where $S=U_E$, so that $a,b \in U_E$.
Then $(f_4\circ\dc(a,U_E))^2=(f_4\circ\dc(b,U_E))^2$, so that by \cref{Irene}\ref{Julia}, we have
\[
\left(\frac{b^{3^k+1}-1}{b^{3^k}+b}\right)^2 =\left(\frac{a^{3^k+1}-1}{a^{3^k}+a}\right)^2,
\]
so that $b \in \{a,-1/a,-a,1/a\}$ by \cref{Uriah}.

So in every case we have $b \in \{a,-a,1/a,-1/a\}$.
If $b \in \{-a,-1/a\}$, then \cref{Thomas} shows that $(f_4\circ\dc)(b,S)=-(f_4\circ\dc)(a,S)$, and since we assumed that $a$ and $b$ are in the same fiber of $f_4\circ\dc$, this makes $(f_4\circ\dc)(b,S)=(f_4\circ\dc)(a,S)=0$.
If $S=\Fu$, then this is impossible by \cref{Irene}\ref{Julius}, and if $S=U_E$, then \cref{Madeleine} forces $a,b \in \{i,-i\}$, where $i$ is a primitive fourth root of unity, so then $\{-a,-1/a\}=\{i,-i\}=\{a,1/a\}$ and so $b \in \{a,1/a\}$.
So we see that $b \in \{a,1/a\}$ always.
Thus, the fiber of $f_4\circ\dc$ containing $(a,S)$ cannot contain any element of the form $(b,S)$ other than $(a,S)$ and $(1/a,S)$.
And in fact, this fiber must contain $(1/a,S)$ because \cref{Thomas} shows that $f_4\circ\dc(1/a,S)=f_4\circ\dc(a,S)$.
\end{proof}
Now we make complete determination of the fibers of $f_4\circ\dc$ and of $f_4$.
\begin{proposition}\label{Edward}
Let $F$ be a finite field of order $q=3^n$ with $n$ odd, let $k$ be an even nonnegative integer with $\gcd(k,n)=1$, let $d_1=(3^n+1)/2$, let $d_2=(3^k+1)/2$, let $e_2=(3^k-1)/2$, and let $\epsilon$ be a positive integer that is a multiplicative inverse of $e_2/2$ modulo $\cardFu/2$ (which exists by \cref{Yolanda}).
Let $f_4$ be as defined in \cref{Dorothy}.
Let $E$ be the quadratic extension of $F$, let $i$ and $-i$ be the fourth roots of unity in $E$, let $U_E$ be the unit circle of $E$, and let $\dc\colon\du \to F$ be the fiber doubling map.
Let $\eta$ be the extended quadratic character of $F$.
Let $s_1(x),\ldots,s_8(x) \in \Fthree[x]$ and $s_9 \in \F_9[x]$ be
\begin{align*}
s_1(x) & = (1-x^2)^{\frac{q+1}{4}}, \\
s_2(x) & = x^{\frac{q-1}{2}+\frac{(q-3)\epsilon}{4}} \left(s_1(x)+1 \right)^{\frac{(3 q-1)\epsilon}{4}}, \\
s_3(x) & = s_2(x)+s_2(x)^{q-2}, \\
s_4(x) & = \prod_{j=0}^{n-1} \left((-1)^j s_1(x)^{3^{j k}}+1\right), \\
s_5(x) & = \left(s_4(x)-s_4(x)^{q-2}\right)\left(s_4(x)+s_4(x)^{q-2}\right)^{q-2}, \\
s_6(x) & = s_1(s_5(x)), \\
s_7(x) & = \left((x-1)^{\frac{q+1}{2}}-(x+1)^{\frac{q+1}{2}}\right)^{\frac{3^k+1}{2}} \left((x-1)^{\frac{3^k+1}{2}}-(x+1)^{\frac{3^k+1}{2}}\right)^{\frac{q-3}{2}}, \\
s_8(x) & = x^{q-2} s_7(s_6(x)) s_6(x)\text{, and} \\
s_9(x) & = -s_8(x)+i s_5(x).
\end{align*}
We have the following determination of the fibers of $f_4\circ\dc$.
\begin{enumerate}[label=(\roman*)]
\item\label{Alan} $(f_4\circ\dc)^{-1}(\{0\})=\{(i,U_E),(-i,U_E)\}$,
\item\label{Boris} $(f_4\circ\dc)^{-1}(\{1\})=\{(1,\Fu),(1,U_E)\}$, and
\item\label{Christine} $(f_4\circ\dc)^{-1}(\{-1\})=\{(-1,\Fu),(-1,U_E)\}$.
\item\label{Deidre} If $c \in F\smallsetminus\Fthree$ with $\eta(1-c^2)=-1$, then $(f_4\circ\dc)^{-1}(\{c\})=\emptyset$.
\item\label{Eileen} If $c \in F\smallsetminus\Fthree$ with $\eta(1-c^2)=1$, then $s_2(c) \in \Fu\smallsetminus\{1,-1\}$ and $s_9(c) \in U_E\smallsetminus\{1,-1\}$ and
\[
(f_4\circ\dc)^{-1}(\{c\})=\{(s_2(c),\Fu),(1/s_2(c),\Fu),(s_9(c),U_E),(1/s_9(c),U_E)\},
\]
which has precisely four distinct elements.
\end{enumerate}
And we have the following determination of the fibers of $f_4$.
\begin{enumerate}[label=(\roman*),start=6]
\item\label{Fergus} If $c \in F$ with $\eta(1-c^2)=-1$, then $f_4^{-1}(\{c\})=\emptyset$.
\item\label{Geoffrey} If $c \in F$ with $\eta(1-c^2)\not=-1$, then 
\[
f_4^{-1}(\{c\})=\{s_3(c),s_8(c)\}.
\]
If $c\in\Fthree$, then $s_3(c)=s_8(c)=-c$, but if $c\not\in\Fthree$, then $s_3(c)$ and $s_8(c)$ are distinct elements of $F\smallsetminus\Fthree$.
\end{enumerate}
\end{proposition}
\begin{proof}
We break the proof into steps.
\begin{gradus}
{\bf Proof of part \ref{Alan}:}
Since $q=3^n$ with $n$ odd, we have $q=3^n \equiv 3 \pmod{4}$, and also $\gcd(d_2,\cardFu)=1$ by \cref{Clementine}\ref{Arthur}.
Thus, part \ref{Alan} follows from \cref{Madeleine}.
\end{gradus}
\begin{gradus}\label{Renatus}
{\bf Proof of part \ref{Boris}:}
For $S \in \{\Fu,U_E\}$, the fact that $F$ is of characteristic $3$ makes it easy to calculate directly $(f_4\circ\dc)(1,S)=f_4(2)=1$, so both $(1,\Fu)$ and $(1,U_E)$ lie in $(f_4\circ\dc)^{-1}(\{1\})$, and then \cref{William} shows that no other element can lie in this fiber, thus proving part \ref{Boris}.
\end{gradus}
\begin{gradus}
{\bf Proof of part \ref{Christine}:}
In view of \cref{Renatus}, \cref{Thomas} (with $p=3$ and $j=n$) shows that $(f_4\circ\dc)(-1,S)=-1$ for $S \in \{\Fu,U_E\}$, so that $(-1,\Fu)$ and $(-1,U_E)$ lie in $(f_4\circ\dc)^{-1}(\{-1\})$, and then \cref{William} shows that no other element can lie in this fiber, thus proving part \ref{Christine}.
\end{gradus}
\begin{gradus}
{\bf Proof of part \ref{Deidre}:}
Now let us suppose that $c\in F\smallsetminus\Fthree$ with $\eta(1-c^2)=-1$.
Suppose for a contradiction that there were some $(a,S) \in (f_4\circ\dc)^{-1}(\{c\})$.
Then $1-c^2=1-\left((f_4\circ\dc)(a,S)\right)^2$.
If $S=\Fu$, then $a \in \Fu$, and then \cref{Irene}\ref{Julius} shows that $1-c^2=((a^{e_2}-a^{-e_2})/(a^{e_2}+a^{-e_2}))^2$, which is a square of an element of $\Fu$, but this contradicts our supposition that $\eta(1-c^2)=-1$.
So we are forced to assume that $S=U_E$, and then $a \in U_E$, so \cref{Irene}\ref{Julia} shows that $1-c^2=-((a^{3^k+1}-1)/(a^{3^k}+a))^2$, which is nonzero since $c\not\in\Fthree$.
Since $a\in U_E$ makes $a^q=a^{-1}$, we have
\[
\left(\frac{a^{3^k+1}-1}{a^{3^k}+a}\right)^q = \frac{a^{-(3^k+1)}-1}{a^{-3^k}+a^{-1}} = -\left(\frac{a^{3^k+1}-1}{a^{3^k}+a}\right),
\]
so $((a^{3^k+1}-1)/(a^{3^k}+a))^2$ is an element of $\Fu$ whose square roots do not lie in $F$.
Thus, since $-1$ is a quadratic nonresidue of $F$ by \cref{Quentin}, this makes $1-c^2$ a quadratic residue of $F$, again contradicting our supposition that $\eta(1-c^2)=-1$.
This finishes the proof of part \ref{Deidre}.
\end{gradus}
\begin{gradus}\label{Jeremy}
{\bf $s_1(c) \in \Fqr\smallsetminus\{1,-1\}$ with $s_1(c)^2=1-c^2$ when $c\in F\smallsetminus\Fthree$ and $\eta(1-c^2)=1$:}
Suppose that $c\in F\smallsetminus\Fthree$ with $\eta(1-c^2)=1$.
\cref{Bernard} tells us that $s_1(c)$ is a square root of $1-c^2$ and is a quadratic residue.
Since $s_1(c)^2=1-c^2$ and $c\not=0$, we know that $s_1(c)$ is not $1$ or $-1$.
\end{gradus}
\begin{gradus}\label{Karl}
{\bf $(f_4\circ\dc)^{-1}(\{c\}) \cap (\Fu\times\{\Fu\}) = \{(s_2(c),\Fu), (1/s_2(c),\Fu)\}$ when $c\in F\smallsetminus\Fthree$ and $\eta(1-c^2)=1$:}
We continue to suppose that $c\in F\smallsetminus\Fthree$ with $\eta(1-c^2)=1$.
Since $c \in \Fu$ and since \cref{Jeremy} shows that $s_1(c)\not=-1$, we see that $s_1(c)+1$ and $s_2(c)$ are in $\Fu$.
Notice that 
\begin{align}
\begin{split}\label{Horatio}
s_2(c)^{\frac{q-1}{2}}
& = c^{\left(\frac{q-1}{2}+\frac{(q-3)\epsilon}{4}\right)\frac{q-1}{2}} \left(s_1(c)+1\right)^{\frac{(3 q-1)\epsilon}{4} \cdot \frac{q-1}{2}} \\
& = c^{\frac{q-1}{2}},
\end{split}
\end{align}
where the last equality reduces exponents modulo $q-1$ using the fact that $(q-1)/2$ is odd but $(q-3)/4$ and $(3 q-1)/4$ are even.
Furthermore, since $e_2$ is even and $(e_2/2)\epsilon \equiv 1 \pmod{(q-1)/2}$, we see that $e_2(q-1)/2 \equiv 0 \pmod{q-1}$ and $e_2 \epsilon \equiv 2 \pmod{q-1}$ and $e_2 \epsilon (3 q-1)/4 \equiv (q+1)/2 \pmod{q-1}$, so that
\begin{align*}
s_2(c)^{e_2}
& = c^{\frac{q-3}{2}} \left(s_1(c)+1 \right)^{\frac{q+1}{2}} \\
& = \left(\frac{s_1(c)+1}{c}\right)^{\frac{q+1}{2}},
\end{align*}
which is a quadratic residue equal to $t (s_1(c)+1)/c$ for some $t \in \{1,-1\}$ by \cref{Lily}.
Thus,
\begin{align*}
s_2(c)^{e_2}+s_2(c)^{-e_2}
& = t \left(\frac{s_1(c)+1}{c} + \frac{c}{s_1(c)+1}\right) \\
& = t \cdot \frac{(s_1(c)+1)^2+c^2}{c(s_1(c)+1)} \\
& = t \cdot \frac{1-c^2+2 s_1(c)+1+c^2}{c(s_1(c)+1)} \\
& = -\frac{t}{c},
\end{align*}
where we recall that our field is of characteristic $3$ in the last equality.
Then since $(q+1)/2$ is even, we see that $(s_2(c)^{e_2}+s_2(c)^{-e_2})^{-(q+1)/2}=c^{(q+1)/2}$.
Therefore, \cref{Irene}\ref{Julius} (along with the fact that $F$ is of characteristic $3$ with $q=\cardF\equiv 3\pmod{8}$) shows that
\begin{align*}
(f_4\circ\dc)(s_2(c),\Fu)
& = s_2(c)^{\frac{q-1}{2}} c^{\frac{q+1}{2}} \\
& = c^q && \text{by \eqref{Horatio}} \\
& = c && \text{since $c \in F$,} 
\end{align*}
and so $(s_2(c),\Fu) \in (f_4\circ\dc)^{-1}(\{c\})$.
\cref{William} shows that $(1/s_2(c),\Fu)$ is also in this fiber, but no other element of the form $(b,\Fu)$ is.
\end{gradus}
\begin{gradus}\label{John}
{\bf $(f_4\circ\dc)(z,U_E) \in \{c,-c\}$ for $z=u(i s_1(c))\in U_E$ with $u$ from \cref{Uriah} when $c\in F\smallsetminus\Fthree$ and $\eta(1-c^2)=1$:}
We continue to suppose that $c\in F\smallsetminus\Fthree$ with $\eta(1-c^2)=1$.
By \cref{Jeremy}, $1-c^2 \in \Fqr$, so by \cref{Alasdair}, $\eta(c^2-1)=-1$.
Clearly $\eta((c^2-1)+1)=\eta(c^2)=1$ since $c$ is a nonzero element of $F$.
We let $B$ and $\Gamma$ be as defined in \cref{Uriah}, so then $c^2-1 \in B$.
Since $i s_1(c)$ is a square root of $c^2-1$ by \cref{Jeremy}, we have $i s_1(c) \in \Gamma$.
We define the functions $h$ and $u$ as in \cref{Uriah}, and let $z=u(i s_1(c))$, which is in $U_E$ since it is an output of $u$.
Then apply \cref{Uriah} to see that $h^{-1}(\{c^2-1\})=\{z,-z,1/z,-1/z\}$, so that $1+h(z)=c^2$.
Then \cref{Irene}\ref{Julia} shows that $1+h(x)=((f\circ\dc)(x,U_E))^2$ for every $x \in U_E$, so we must have $(f_4\circ\dc)(z,U_E) \in \{c,-c\}$.
\end{gradus}
\begin{gradus}\label{Hugo}
{\bf $r_3(-i(i s_1(c))=s_5(c)$ with $r_3$ from \cref{Uriah} when $c\in F\smallsetminus\Fthree$ and $\eta(1-c^2)=1$:}
We continue to suppose that $c\in F\smallsetminus\Fthree$ with $\eta(1-c^2)=1$.
Recall the definitions of $r_1$, $r_2$, and $r_3$ from \cref{Uriah}, and note that the formula for $z=u(i s_1(c))$ involves the term $r_3(-i (i s_1(c))=r_3(s_1(c))=(r_1(s_1(c))+r_2(s_1(c)))/(r_1(s_1(c))-r_2(s_1(c)))$, and \cref{Uriah} ensures that $r_1(s_1(c))\not=r_2(s_1(c))$ because $i s_1(c) \not \in \{i,-i\}$ by \cref{Jeremy}.
We then define
\[
P(x) = \prod_{j=0}^{n-1} \left((-1)^j s_1(x)^{3^{j k}}-1\right),
\]
and see that $r_1(s_1(c))=s_4(c)$ and $r_2(s_1(c))=P(c)$ with $s_4(c)\not=P(c)$.
If we let $N_{F/\Fthree}$ be the absolute norm from $F$ to $\Fthree$, then
\begin{align*}
s_4(c) P(c)
& = \prod_{j=0}^{n-1} \left((s_1(c)^2)^{3^{j k}}-1\right) \\
& = \prod_{j=0}^{n-1}(-c^2)^{3^{j k}} && \text{by \cref{Jeremy}} \\
& = -N_{F/\Fthree}(c)^2 && \text{$n$ is odd and $\gcd(k,n)=1$} \\
& = -1 && \text{$N_{F/\Fthree}$ maps into $\Fthree^*=\{1,-1\}$}.
\end{align*}
Therefore $s_4(c)$, $P(c) \in \Fu$ with $P(c)=-s_4(c)^{-1}=-s_4(c)^{q-2}$, and we have already observed that $s_4(c)\not=P(c)$, so that $s_4(c)+s_4(c)^{q-2} \in \Fu$.
Then
\begin{align*}
r_3(-i (i s_1(c))
& = \frac{r_1(s_1(c))+r_2(s_1(c))}{r_1(s_1(c))-r_2(s_1(c))} && \text{(see Prop.~\ref{Uriah})} \\
& = \frac{s_4(c)+P(c)}{s_4(c)-P(c)} \\
& =\left(s_4(c)-s_4(c)^{q-2}\right)\left(s_4(c)+s_4(c)^{q-2}\right)^{q-2} \\
& = s_5(c).
\end{align*}
\end{gradus}
\begin{gradus}
{\bf $z=i s_5(c)+ s_6(c)\in U_E$ and $1/z=-i s_5(c)+s_6(c)\in U_E$ when $c\in F\smallsetminus\Fthree$ with $\eta(1-c^2)=1$:}
We continue to suppose that $c\in F\smallsetminus\Fthree$ with $\eta(1-c^2)=1$.
Recall from the previous step that $s_4(c)+s_4(c)^{q-2} \in \Fu$.
By the definition of $s_5$, we have
\begin{align*}
1-s_5(c)^2
& = \frac{\left(s_4(c)+s_4(c)^{q-2}\right)^2-\left(s_4(c)-s_4(c)^{q-2}\right)^2}{\left(s_4(c)+s_4(c)^{q-2}\right)^2} \\
& = \frac{1}{\left(s_4(c)+s_4(c)^{q-2}\right)^2},
\end{align*}
since we are in characteristic $3$ and $s_4(c) \in \Fu$ by \cref{Hugo}.
Thus, $1-s_5(c)^2 \in\Fqr$.
When selecting the square root function $x\mapsto \sqrt{x}$ in \cref{Uriah}, we may use \cref{Bernard} to insist that $\sqrt{x}=x^{(q+1)/4}$ whenever $x \in F$ with $\eta(x)\not=-1$.
By \cref{Hugo}, the term $\sqrt{1-r_3(-i(i s_1(c)))^2}$ in the formula for $u(i s_1(c))$ in \cref{Uriah} becomes
\begin{align}\begin{split}\label{Gertrude}
\sqrt{1-s_5(c)^2}
&= (1-s_5(c)^2)^{(q+1)/4} \\
&= s_1(s_5(c))\\
&= s_6(c).
\end{split}\end{align}
So in fact, \cref{Uriah} and \cref{Hugo} show that $z=u(i s_1(c))=i s_5(c)+s_6(c)$ and also that $-1/z=i s_5(c)-s_6(c)$, so that $1/z=-i s_5(c)+s_6(c)$.
\end{gradus}
\begin{gradus}
{\bf $f_4(s_8(c))=c$ when $c\in F\smallsetminus\Fthree$ with $\eta(1-c^2)=1$:}
We continue to suppose that $c\in F\smallsetminus\Fthree$ with $\eta(1-c^2)=1$.
Using the values of $z$ and $1/z$ from the previous step, we have $\dc(z,U_E)=2 s_6(c)=-s_6(c)$, so that $(f_4\circ\dc)(z,U_E)=f_4(-s_6(c))=-f_4(s_6(c))$ by \cref{Thomas}.
Since $(f_4\circ\dc)(z,U_E) \in \{c,-c\}$ by \cref{John}, this shows that there is some $w \in \{1,-1\}$ such that $f_4(s_6(c))=w c$.
Since we are in characteristic $3$, one can verify that $s_7(a)=f_4(a)$ for every $a \in F$, so $s_7(s_6(c))=w c$.

Since $c \in \Fu$, we see that $s_8(c) \in F$ with
\begin{align}\begin{split}\label{Desmond}
s_8(c)
& = c^{q-2} s_7(s_6(c)) s_6(c) \\
& = c^{q-2} w c s_6(c) \\
& = w s_6(c).
\end{split}\end{align}
Then  
\begin{align}\begin{split}\label{Nicholas}
f_4(s_8(c))
& = f_4(w s_6(c)) \\
& = w f_4(s_6(c)) \\
& = w (w c) \\
& = c,
\end{split}\end{align}
where we use \cref{Thomas} in the second equality.
\end{gradus}
\begin{gradus}\label{Milton}
{\bf $(f_4\circ\dc)^{-1}(\{c\}) \cap (U_E\times\{U_E\}) = \{(s_9(c),U_E), (1/s_9(c),U_E)\}$ when $c\in F\smallsetminus\Fthree$ and $\eta(1-c^2)=1$:}
We continue to suppose that $c\in F\smallsetminus\Fthree$ with $\eta(1-c^2)=1$.
Note that $q \equiv 3 \pmod{4}$ and that $s_5(c), s_8(c) \in F$ since $c \in F$, so that $s_9(c)^q= -s_8(c)^q + i^q s_5(c)^q = -s_8(c)-i s_5(c)$.
And note that
\begin{align*} 
s_9(c) (-s_8(c)-i s_5(c))
&= s_8(c)^2+s_5(c)^2 \\
&= s_6(c)^2+s_5(c)^2 && \text{by \eqref{Desmond}} \\
&= 1 && \text{by \eqref{Gertrude},}
\end{align*}
so $s_9(c)^{-1}=-s_8(c)-i s_5(c)=s_9(c)^q$, showing that $s_9(c) \in U_E$.
Therefore
\begin{align}\begin{split}\label{Solon}
\dc(s_9(c),U_E)
& =(-s_8(c)+i s_5(c))+(-s_8(c)-i s_5(c)) \\
& = s_8(c),
\end{split}\end{align}
since $F$ is of characteristic $3$.
Therefore $(f_4\circ\dc)(s_9(c),U_E) = f_4(s_8(c)) = c$ by \eqref{Nicholas}.

Thus, $(s_9(c),U_E) \in (f_4\circ\dc)^{-1}(\{c\})$ and therefore, by \cref{William}, we also know that $(1/s_9(c),U_E)$ also lies in this fiber and that no other element of the form $(b,U_E)$ lies in this fiber.
\end{gradus}
\begin{gradus}
{\bf Proof of part \ref{Eileen}:}
Together, Steps \ref{Karl} and \ref{Milton} show that if $c\in F\smallsetminus\Fthree$ with $\eta(1-c^2)=1$, then
\[
(f_4\circ\dc)^{-1}(\{c\})=\{(s_2(c),\Fu),(1/s_2(c),\Fu),(s_9(c),U_E),(1/s_9(c),U_E)\}.
\]
Furthermore $s_2(c)\not=1/s_2(c)$ and $s_9(c)\not=1/s_9(c)$ since otherwise $s_2(c)$ or $s_9(c)$ would lie in $\{1,-1\}$, but we already know from parts \ref{Boris} and \ref{Christine} that $(1,\Fu)$, $(-1,\Fu)$, $(1,U_E)$, and $(-1,U_E)$ lie in fibers over $1$ and $-1$, but $c\not\in\{1,-1\}$ here.
Thus, $(f_4\circ\dc)^{-1}(\{c\})$ has precisely four distinct elements, and this completes the proof of part \ref{Eileen}.
\end{gradus}
\begin{gradus}
{\bf Proof of parts \ref{Fergus} and \ref{Geoffrey}:}
Now \cref{Edgar} tells us that for each $c\in F$, we can obtain the fiber of $f_4^{-1}(\{c\})$ by applying $\dc$ to the fiber of $(f_4\circ\dc)^{-1}(\{c\})$.
Thus, part \ref{Fergus} follows immediately from part \ref{Deidre} since any $c \in F$ with $\eta(1-c^2)=-1$ clearly has $c\not\in\Fthree$.
Furthermore, if we apply $\dc$ to the elements of the fibers described in parts \ref{Alan}--\ref{Christine}, we obtain
\begin{align*}
\dc(i,U_E)  &=\dc(-i,U_E) =0 = s_3(0)  = s_8(0)  = -0 \\
\dc(1,\Fu)  &=\dc(1,U_E) =-1 = s_3(1)  = s_8(1)  = -1\\
\dc(-1,\Fu) &=\dc(-1,U_E) =1 = s_3(-1) = s_8(-1) = -(-1),
\end{align*}
where it helps in these calculations to notice that $(q-1)/2$ and $(3^k+1)/2$ are odd while $(q-3)/4$ is even.
So if $c\in\Fthree$, we have $f_4^{-1}(\{c\})=\{s_3(c),s_8(c)\}$ with $s_3(c)=s_8(c)=-c$.
But if $c \in F\smallsetminus\Fthree$ with $\eta(1-c^2)\not=-1$, then $\eta(1-c^2)=1$ and
\begin{align*}
\dc(s_2(c),\Fu) & =\dc(1/s_2(c),\Fu) = s_2(c)+s_2(c)^{-1} = s_3(c) && \text{since $s_2(c) \in \Fu$} \\
\dc(s_9(c),U_E) & =\dc(1/s_9(c),U_E) = s_8(c) && \text{by \eqref{Solon}},
\end{align*}
thus making $f_4^{-1}(\{c\})=\{s_3(c),s_8(c)\}$, where $s_3(c)$ and $s_8(c)$ must be not be in $\Fthree$ (as those elements are already accounted for in other fibers) and $s_3(c)$ and $s_8(c)$ must be distinct because $|f_4^{-1}(\{c\})|=|(f_4\circ\dc)^{-1}(\{c\})|/2=4/2=2$ by \cref{Edgar} and part \ref{Eileen}.\qedhere
\end{gradus}
\end{proof}

\section{Proof of \cref{Victoria}}\label{Prospero}

We now prove our main result (\cref{Victoria}) about the fibers of the discrete derivative $\Delta f$ of our power mapping $f$.
Let $d$ denote our exponent $(3^n+1)/(3^k+1)$ of our power function $x\mapsto x^d$ over $F$ and set $d_1=(3^n+1)/2$ and $d_2=(3^k+1)/2$.
Then $\gcd(d_2,3^n-1)=1$ by \cref{Clementine}\ref{Arthur}, so that $d$ can be regarded as $d_1/d_2$.
Then we may apply Lemmas \ref{Aaron}--\ref{Dorothy} to see that $\Delta f=\sigma^{-1}\circ f_4\circ\tau^{-1}\circ\pi^{-1}\circ\sigma^{-1}\circ\pi$, where $\pi$, $\sigma$, and $\tau$ are maps from $F$ to $F$ with $\pi(x)=1+x^{q-2}$, $\sigma(x)=x^{(3^k+1)/2}$, and $\tau(x)=(x-2)/4=x+1$ (since we are in characteristic $3$).
We should note that $\sigma$, $\pi$, and $\tau$ are all permutations of $F$, and we let $\psi=\pi^{-1}\circ\sigma\circ\pi\circ\tau$, which is a permutation that makes $\Delta f=\sigma^{-1} \circ f_4 \circ \psi^{-1}$.

Lemmas \ref{Fedor} and \ref{Leonid} show that for each $c \in F$, we have
\begin{equation}\label{Nelson}
(\Delta f)^{-1}(\{c\}) = \psi\left(f_4^{-1}(\{\sigma(c)\})\right).
\end{equation}
By \cref{Edward}, we know that our fiber is empty if and only if $\eta(1-\sigma(c)^2)=-1$, i.e., if and only if $\eta(1-c^{3^k+1})=-1$.

For this paragraph, we assume that $c \in F$ with $\eta(1-c^{3^k+1})\not=-1$.
We adopt the definitions of the polynomials $s_1,\ldots,s_9$ from \cref{Edward} and apply that proposition to \eqref{Nelson} to obtain
\[
(\Delta f)^{-1}(\{c\}) = \{\psi(s_3(\sigma(c))), \psi(s_8(\sigma(c)))\}.
\]
It is straightforward to verify that for each $j \in \{1,2,3,4,5,6,8\}$ and $a \in F$, we have $s_j(\sigma(a))=s_j(a^{(3^k+1)/2})=p_j(a)$ as long as one keeps in mind that $(3^k+1)/2$ is odd (because of \cref{Francis}) and that exponents on powers can be reduced modulo $q-1$ as long as positivity is maintained (since we are doing arithmetic in $F$ at this point).
It is not hard to work out that $p_9(a)=\psi(a)$ for all $a \in F$.
From this we conclude that $\psi(s_j(\sigma(c))=p_9(p_j(c))$ for $j \in \{3,8\}$, so that
\[
(\Delta f)^{-1}(\{c\}) = \{p_{10}(c),p_{11}(c)\}.
\]
\cref{Edward} says that $s_3(c)$ and $s_8(c)$ are distinct if and only if $c\not\in\Fthree$.
Since $\sigma$ and $\psi$ are permutations of $F$ and $\sigma$ restricts to a permutation of $\Fthree$, this is the same as saying that $p_{10}(c)$ and $p_{11}(c)$ are distinct if and only if $c\not\in\Fthree$.
For each $a\in\Fthree$, \cref{Edward} says that $s_3(a)=s_8(a)=-a$, and one can easily work out (using the fact that $(3^k+1)/2$ is odd) that $\sigma(a)=a$ and $\psi(a)=a+1$.
Therefore, when $c\in\Fthree$, we have $p_{10}(c)=p_{11}(c)=1-c$.

Our determination of the fibers of $\Delta f$ in the last two paragraphs shows that the formula $\card{(\Delta f)^{-1}(\{c\})}=1+\eta(1-c^2)$ holds true for all $c\in\Fu$, but $\card{(\Delta f)^{-1}(\{0\})}=1$.
Since there are three fibers of size $1$, the remaining $q-3$ fibers must be equally split between sizes $0$ and $2$, since the average fiber size is $1$.
This yields the claimed reduced differential spectrum.
When $n=1$, we have $q=3$, and so all fibers of $\Delta f$ are of size $1$, so that $f$ is PN.
When $n>1$, there are fibers of size $2$, so $f$ is APN.
\hfill\qedsymbol

\appendix
\section{Relation of our family of exponents to a known family}

Consider the family of power functions discussed in \cref{Victoria}.
These are given by $x \mapsto x^{(3^n+1)/(3^k+1)}$ over $\F_{3^n}$ where $n$ is odd, $k$ is even, and $\gcd(k,n)=1$.
(Note that $\gcd((3^k+1)/2,3^n-1)=1$ by \cref{Clementine}, so that our exponent can be interpreted as any positive integer $d$ such that $d \equiv ((3^n+1)/2)((3^k+1)/2)^{-1} \pmod{3^n-1}$, following the convention for fractional exponents set down in the Introduction.)
This section is dedicated to proving that the exponents in this family of power functions are, up to the equivalence defined in the Introduction, the same as those in a theorem of Zha and Wang \cite[Theorem 4.1]{Zha-Wang}, which we now quote.
\begin{theorem}[Zha--Wang, 2010]\label{Sidney}
Let $f(x)=x^d$ be a function over $\GF(3^n)$ where $d$ is even and $(3^m+1)d-2=k(3^n-1)$, $k$ is odd and $\gcd(m,n)=1$. Then $\Delta_d \leq 2$. Furthermore, $f(x)$ is an APN function when $2m < n$.
\end{theorem}
We note that Zha and Wang use $\Delta_d$ to mean the differential uniformity of the function $f(x)$, so $\Delta_d \leq 2$ means that $f$ is either PN or APN.
For the terms in \cref{Sidney} to be defined and make sense, we need to understand $n$, $d$, $m$, and $k$ to be integers with $n$ and $d$ positive and $m$ nonnegative (and \cref{Reginald} below points out that Zha and Wang may have intended $m$ to always be strictly positive).
We now prove the equivalence of our exponents and those of Zha and Wang.
\begin{lemma}\label{Zenobia}
If $n$, $m$, $k$, and $d$ are integers with $n$ and $d$ positive, $m$ nonnegative, $d$ even, $k$ odd, $\gcd(m,n)=1$, and $(3^m+1)d-2=k(3^n-1)$, then $n$ is odd and there is some nonnegative even integer $j$ with $\gcd(j,n)=1$ such that $d$ is equivalent to the fractional exponent $d'=(3^n+1)/(3^j+1)$ over $\F_{3^n}$.
On the other hand, if $n$ and $j$ are integers with $n$ positive, $j$ nonnegative, $n$ odd, $j$ even, and $\gcd(j,n)=1$, and if $d'$ is the fractional exponent $(3^n+1)/(3^j+1)$ in a power function over $\F_{3^n}$, then there are integers $m$, $k$, and $d$ with $m$ and $d$ positive, $d$ even, $k$ odd, $\gcd(m,n)=1$, and $(3^m+1)d-2=k(3^n-1)$ such that $d'$ is equivalent to $d$ over $\F_{3^n}$.
\end{lemma}
\begin{proof}
Suppose $n$, $m$, $k$, and $d$ are integers with $n$ and $d$ positive, $m$ nonnegative, $d$ even, $k$ odd, $\gcd(m,n)=1$, and 
\begin{equation}\label{Jane}
(3^m+1)d-2=k(3^n-1).
\end{equation}
Then notice that since both $3^m+1$ and $d$ are even, we have $(3^m+1) d-2 \equiv 2 \pmod{4}$.
Thus, \eqref{Jane} forces $3^n-1 \equiv 2 \pmod{4}$, which forces $n$ to be odd.
Since $k$ is odd, divide \eqref{Jane} by $2$, reduce modulo $3^n-1$, and rearrange to get
\begin{equation}\label{Tabitha}
\frac{3^n+1}{2} \equiv \left(\frac{3^m+1}{2}\right)d  \pmod{3^n-1}.
\end{equation} 

If $m$ is even, set $j=m$ and $d'=d$ and note that $\gcd((3^j+1)/2,3^n-1)=1$ by \cref{Clementine}, so that $(3^j+1)/2$ is invertible modulo $3^n-1$, and then \eqref{Tabitha} shows that $d=d'$ can be regarded as the fractional exponent $(3^n+1)/(3^j+1)$ in a power function over $\F_{3^n}$.
Then recall that $j=m$ is nonnegative, even, and has $\gcd(j,n)=1$, so that we have proved our first claim in the case where $m$ is even.

If $m$ is odd, pick some positive odd integer $u$ such that $u n-m$ is nonnegative, set $j=u n-m$ (which is even since $u$, $n$ and $m$ are odd), and let $d'=3^m d$ (which is even since $d$ is even).
Then $3^j d' = 3^{u n} d \equiv d \pmod{3^n-1}$.
Thus, substituting $3^j d'$ in for $d$ in \eqref{Tabitha} yields
\[
\frac{3^n+1}{2} \equiv (3^{u n}+3^j)  \frac{d'}{2} \equiv \left(\frac{3^j+1}{2}\right) d' \pmod{3^n-1}.
\]
Then $\gcd((3^j+1)/2,3^n-1)=1$ by \cref{Clementine}, so that $(3^j+1)/2$ is invertible modulo $3^n-1$, and so our congruence shows that we can regard $d'$ (which, being $3^m d$, is equivalent to $d$) as the fractional exponent $(3^n+1)/(3^j+1)$ in a power function over $\F_{3^n}$.
Recall that $j$ is nonnegative and even, and note that $\gcd(j,n)=\gcd(u n-m,n)=\gcd(m,n)=1$.
This completes the proof of our first claim.

Now we prove the second claim, so suppose that $n$ and $j$ are integers with $n$ positive, $j$ nonnegative, $n$ odd, $j$ even, $\gcd(j,n)=1$, and suppose that $d'$ is the fractional exponent $(3^n+1)/(3^j+1)$ in a power function over $\F_{3^n}$, which is defined because \cref{Clementine} makes $\gcd((3^j+1)/2,3^n-1)=1$ so that $d'$ can be regarded as a positive integer with $d' \equiv ((3^n+1)/2)((3^j+1)/2)^{-1} \pmod{3^n-1}$.
Note that $d'$ must be even since \cref{Francis} shows that $v_2(3^n+1) > v_2(3^j+1)$.
Set $d=d'$, which is a positive even integer, and set $m=j+n$, which is positive and has $\gcd(m,n)=\gcd(j,n)=1$.
Then
\begin{align*}
\left(\frac{3^m+1}{2}\right) d
& = \left(\frac{3^{j+n}+1}{2}\right) d'\\
& = \left(\frac{3^j+1}{2}\right) d' + \left(\frac{d'}{2}\right) 3^j (3^n-1) \\
& \equiv \left(\frac{3^j+1}{2}\right) d' \pmod{3^n-1} && \text{(because $d'$ is even)} \\
& \equiv \left(\frac{3^n+1}{2}\right) \pmod{3^n-1}.
\end{align*}
So there is some $\ell \in \Z$ such that
\[
\left(\frac{3^m+1}{2}\right) d = \frac{3^n+1}{2} + \ell(3^n-1),
\]
and we set $k=2\ell+1$, which is an odd integer with
\[
(3^m+1) d -2 =k(3^n-1).\qedhere
\]
\end{proof}
\begin{remark}\label{Reginald}
\cref{Zenobia} shows that all exponents considered in our main result, \cref{Victoria}, are equivalent to exponents considered in \cref{Sidney} of Zha and Wang, and vice-versa.
\cref{Sidney} says that the power functions for those exponents are always PN or APN, and adds that they are certainly APN if $2 m < n$.
\cref{Victoria} shows that the power function is always PN when $n=1$, so \cref{Sidney} would be in error if we set $n=1$ and $m=0$, so perhaps its authors intended $m$ to be strictly positive so that their condition $2 m < n$ would force $n \geq 3$, but they did not state this assumption in their theorem.
In fact, \cref{Victoria} also shows that the power functions are always APN when $n > 1$, so the condition $2 m < n$ in \cref{Sidney} is not mathematically necessary for APNness when $n>1$.
\end{remark}

\section*{Acknowledgement}

The authors thank the anonymous reviewers and Alexander Pott, the editor handling the paper, for their helpful suggestions.

\end{document}